
\documentclass[sigconf, nonacm, pdfa]{acmart}
\usepackage{placeins}

\usepackage[linesnumbered,ruled,vlined]{algorithm2e}
\usepackage{xcolor}
\usepackage[normalem]{ulem}
\usepackage{todonotes}
\usepackage{etoolbox} 
\usepackage{tabularx} 
\usepackage{enumitem}
\usepackage{setspace}
\usepackage{subfig}
\usepackage{makecell}
\usepackage{caption}
\usepackage{verbatimbox}
\usepackage{float}
\usepackage{multirow}
\usepackage{xcolor-material}
\usepackage[most]{tcolorbox}
\usepackage{nameref}
\usepackage{hyperref}

\usepackage{tikz}
\usetikzlibrary{positioning,shapes,arrows,calc,fit}
\usepackage{tikz-uml}

\colorlet{amaranth}{MaterialRed600}

\setlength{\textfloatsep}{10pt}     


\SetKwComment{Comment}{\color{MaterialGreen700}// }{}
\newcommand{\inlinetcp}[1]{\textcolor{MaterialGreen700}{\texttt{// #1}}}

\SetKwProg{Fn}{Procedure}{}{}
\SetKw{Continue}{continue}

\DontPrintSemicolon
\newcommand{\boxit}[3][1]{  
    \tikz[remember picture,overlay] \node (A) {};\ignorespaces
    \tikz[remember picture,overlay]{
        \node[
            yshift=3pt,
            fill=#2,
            rounded corners=2pt,
            line width=0.5pt,
            fit={($(A)+(0,0.15\baselineskip)$)
                 ($(A)+(.9\linewidth/#1,-#3\baselineskip - \baselineskip)$)}
        ] {};
    }\ignorespaces
}

\setlength\intextsep{1.5pt}



\newcommand\vldbauthors{\authors}
\newcommand\vldbtitle{\shorttitle} 
\newcommand\vldbavailabilityurl{https://doi.org/10.5281/zenodo.15593109}

\newcommand{\algoname}{Cuckoo Heavy Keeper}

\newcommand{\paralgonamei}{mCHK-I}
\newcommand{\paralgonameq}{mCHK-Q}  
\newcommand{\algoemph}[1]{\emph{#1}}
\newcommand{\Thp}{L}

\newif\ifdebug 
\debugtrue 

\newcommand{\vinhcom}[1]{\todo[color=purple!40]{Vinh: #1}}
\newcommand{\vinhfixed}[2]{%
  \ifx\relax#1\relax%
    \textcolor{blue}{#2}%
  \else%
    \vinhcom{\sout{#1}}\textcolor{blue}{#2}%
  \fi%
}
\newcommand{\vinhreadmpcom}[1]{\todo[color=orange!40]{(read) MP: #1}}
\newcommand{\vinhreadmpcomfoot}[1]{\vinhreadmpcom{#1}}
\newcommand{\vinhfixedmpcom}[2]{%
  \todo[color=purple!40]{Marina: \sout{#1}}%
  \IfValueTF{#2}{\textcolor{green}{#2}}{}%
}

\newcommand{\vinhrevise}[2]{%
  \ifx\relax#1\relax%
    \textcolor{blue}{#2}%
  \else%
    \vinhcom{\sout{#1}}\textcolor{blue}{#2}%
  \fi%
}
\newcommand{\vinhemphrevise}[2]{%
  \ifx\relax#1\relax%
    \textcolor{red}{#2}%
  \else%
    \vinhcom{\sout{#1}}\textcolor{red}{#2}%
  \fi%
}
\newcommand{\vinhsectionrevise}[1]{{\color{blue}#1}}

\newcommand{\mpcom}[1]{\todo[color=orange!40]{MP: #1}}
\newcommand{\mpcomfoot}[1]{\todo[color=orange!40]{MP: #1}}

\newcommand{\mprevise}[2]{%
  \ifx\relax#1\relax%
    \textcolor{red}{#2}%
  \else%
    \mpcom{\sout{#1}}\textcolor{red}{#2}%
  \fi%
}


\newcommand{\setreleasemargins}{%
}
 
\newcommand{\setreleasecommands}{%
  \renewcommand{\vinhcom}[1]{}
  \renewcommand{\vinhfixed}[2]{##2}
  \renewcommand{\mpcom}[1]{}
  \renewcommand{\mpcomfoot}[1]{}
  \renewcommand{\vinhreadmpcom}[1]{}
  \renewcommand{\vinhreadmpcomfoot}[1]{}
  \renewcommand{\vinhfixedmpcom}[2]{##2}
  \renewcommand{\vinhrevise}[2]{##2}
  \renewcommand{\vinhemphrevise}[2]{\textcolor{red}{##2}}
  \renewcommand{\vinhsectionrevise}[1]{{\color{blue}##1}}
}

\newcommand{\releasemode}{%
  \debugfalse
  \setreleasemargins
  \setreleasecommands
}

\releasemode
 
\begin{document}
\title{\algoname{} and the balancing act of maintaining heavy hitters in stream processing}
\author{Vinh Quang Ngo}
\orcid{0009-0005-6638-9922}
\affiliation{
  \institution{Chalmers Un. of Technology and Gothenburg Un.}
  \city{Gothenburg}
  \country{Sweden}
}
\email{vinhq@chalmers.se}

\author{Marina Papatriantafilou}
\orcid{0000-0001-9094-8871}
\affiliation{
  \institution{Chalmers Un. of Technology and Gothenburg Un.}
  \city{Gothenburg}
  \country{Sweden}
}
\email{ptrianta@chalmers.se}

\begin{abstract}
Finding heavy hitters in databases and data streams is a fundamental problem with applications ranging from network monitoring to database query optimization, machine learning, and more. 
Approximation algorithms offer practical solutions, but they present trade-offs involving throughput, memory usage, and accuracy. Moreover, modern applications further complicate these trade-offs by demanding capabilities beyond sequential processing that require both parallel scaling and support for concurrent queries and updates.

Analysis of these trade-offs led us to the key idea behind our proposed streaming algorithm, \emph{\algoname{}} (CHK). The approach introduces an inverted process for distinguishing frequent from infrequent items,
which unlocks new algorithmic synergies that were previously inaccessible with conventional approaches.
By further analyzing the competing metrics with a focus on parallelism, we propose an algorithmic framework that balances scalability aspects and provides options to optimize query and insertion efficiency based on their relative frequencies. The framework is capable of parallelizing any heavy-hitter detection algorithm. 

Besides the algorithms' analysis, we present an extensive evaluation on both real-world and synthetic data across diverse distributions and query selectivity, representing the broad spectrum of application needs. 
Compared to state-of-the-art methods, CHK improves throughput by 1.7-5.7$\times$ and accuracy by up to four orders of magnitude even under low-skew data and tight memory constraints. 
These properties allow its parallel instances to achieve near-linear scale-up and low latency for heavy-hitter queries, even under a high query rate.
We expect the versatility of CHK and its parallel instances to impact a broad spectrum of tools and applications in large-scale data analytics and stream processing systems.

\end{abstract}

\maketitle

\pagestyle{plain}
\begingroup\small\noindent\raggedright\textbf{Reference Format:}\\
\vldbauthors. \vldbtitle. 
Available at: \href{https://arxiv.org/abs/XXXX.XXXXX}{arXiv:XXXX.XXXXX}.
\endgroup

\begingroup
\renewcommand\thefootnote{}\footnote{\noindent
This work is made available under the Creative Commons BY-NC-ND 4.0 International License. Visit \url{https://creativecommons.org/licenses/by-nc-nd/4.0/} for details. For any use beyond those covered by this license, obtain permission by contacting the authors. \\
Copyright is held by the authors. This version is prepared for arXiv submission.}
\addtocounter{footnote}{-1}\endgroup

\ifdefempty{\vldbavailabilityurl}{}{
\vspace{.3cm}
\begingroup\small\noindent\raggedright\textbf{PVLDB Artifact Availability:}\\
The source code, data, and/or other artifacts have been made available at \url{\vldbavailabilityurl}.
\endgroup
}
\newpage
\section{Introduction}
The problem of finding heavy hitters in a set or stream of items requires identifying the items that appear more times than a specified fraction $\phi$ of the set or the processed stream size~\cite{Misra1982FindingRE}.
Besides, it is often needed by downstream applications also to return the estimated heavy hitters' frequencies.
Corporations such as AT\&T~\cite{cormode2004holistic}, Google~\cite{pike2005interpreting}, and Cloudflare~\cite{rakelimit2020} extensively use heavy-hitter detection solutions for network traffic analysis~\cite{Yang2019HeavyKeeperAA, shi2023cuckoo, Yang2018ElasticSA}, anomaly detection~\cite{lakhina2004characterization}, and iceberg query processing~\cite{beyer1999bottom, fang1999computing}. 
These algorithms are also implemented in 
analytics platforms including Druid, Redis, and Databricks~\cite{druid_topn, redis_top_k, databricks_topk}.
Most recently, heavy-hitter detection has also seen prominent applications in large language models, where identifying frequently accessed tokens can improve inference throughput significantly~\cite{zhang2023h2o}.

Considering the size and input rate of the data, it is important to find algorithms that have favorable memory requirements and system alignment (e.g., cache-friendliness), as well as capabilities to process stream items promptly. 
Knowing that the exact solution requires memory linear to the number of distinct items in the stream~\cite{garofalakis2016data}, and that many applications accept some approximation, a substantial volume of literature focuses on succinct (sublinear) representations from which heavy hitters and their frequencies can be queried approximately. 
A common formulation is the $\epsilon$-$\phi$-\emph{heavy hitters} problem, where the requirement is to return the estimated heavy hitters and their estimated frequencies, approximated within a bounded difference {$\epsilon$} from their true frequencies.
If, furthermore, the requirement allows that the estimated heavy hitter and its frequency are within the bounded difference with a probability of at least $1-\delta$, the problem is known as the $(\epsilon, \delta)$-$\phi$-\emph{heavy hitters}. Such a relaxation can align with an even lower memory footprint.

\textit{Challenges.}  
Given the same amount of \emph{memory}, $\epsilon$-$\phi$-\emph{heavy hitters} algorithms can be compared based on their \emph{throughput and accuracy} across different workloads.
They can be grouped into key-value \emph{(KV)-based}, \emph{sketch-based}, and \emph{hybrid algorithms}.
\emph{KV-based} algorithms such as \algoemph{Frequent}~\cite{Misra1982FindingRE}, \algoemph{LossyCounting}~\cite{Manku2012ApproximateFC}, and \algoemph{Space-Saving}~\cite{Metwally2006AnIE} maintain a fixed number of key-value counters to find the heavy hitters deterministically. 
While these perform well in finding heavy hitters due to their deterministic nature, they suffer from significant frequency estimation errors and throughput challenges. 
\emph{Sketch-based} algorithms such as \algoemph{Count-MinSketch}~\cite{Cormode2004AnID} and \algoemph{CountSketch}~\cite{Charikar2002FindingFI} typically use a series of scalar counters arranged in a two-dimensional array with multiple hash functions to map input items to counters;
however, collisions can cause infrequent items to be mistaken for frequent ones.
\emph{Hybrid} algorithms such as 
\algoemph{HeavyGuardian}~\cite{Yang2018HeavyGuardianSA},
\algoemph{ElasticSketch}~\cite{Yang2018ElasticSA},
\algoemph{AugmentedSketch}~\cite{roy2016augmented},
\algoemph{CuckooCounter}~\cite{shi2023cuckoo},
\algoemph{Topkapi}~\cite{Mandal2018TopkapiPA}, and
\algoemph{HeavyKeeper}~\cite{Yang2019HeavyKeeperAA} combine the strengths of both KV-based and sketch-based approaches to provide better results regarding throughput, memory usage, and accuracy. 
Nevertheless, depending on the ways techniques are combined and used, there are synergies and trade-offs, which call for better balancing of competing requirements and further improvements.\looseness=-1

Moreover, as data volumes and rates continue to grow while applications evolve, there is an increasing need for solutions that go beyond sequential processing capabilities, posing \emph{challenges on both scalable parallel performance and support for concurrent queries and updates}~\cite{garofalakis2016data}. 
While there is some recent work realizing such needs~\cite{Rinberg2020IntermediateVL, Rinberg2022FastCD, Stylianopoulos2020DelegationSA,hilgendorf2025lmqsketchlagommultiquerysketch}, parallelizing $\epsilon$-$\phi$-\emph{heavy hitters} algorithms has focused mainly on parallelizing insertions (cf.~\cite{Zhang2014AnEF, Mandal2018TopkapiPA}), with only one exception~\cite{jarlow2024qpopss}, to the best of our knowledge, on concurrent queries and insertions.

\textit{Contributions.} 
\vinhrevise{}{We provide ways to balance and improve the multi-faceted trade-offs of the problem, in two orthogonal directions:}

First, we propose \emph{\algoname}{} (CHK), a fast, accurate, and space-efficient \emph{$\epsilon$-$\phi$-heavy hitters} algorithm.
CHK {inverts} and {repurposes} components in the conventional data flow seen in hybrid approaches~\cite{Yang2018HeavyGuardianSA, roy2016augmented, Yang2018ElasticSA} --- instead of placing items directly into the \textit{heavy part} and redirecting them to a fallback storage (where estimated frequencies have higher relative error) when needed, CHK requires items to first pass through the \textit{lobby part} and prove their significance before being promoted to the \textit{heavy part}; i.e. the \textit{lobby} acts as a lightweight filter to identify potential heavy hitters, while the \textit{heavy part} maintains accurate counts of heavy candidates through hash collision resolution.
This inversion unlocks new algorithmic synergies  (e.g., applying hash collision resolution selectively to heavy-hitter candidates), which were inaccessible with common existing approaches, thus resulting in better throughput, memory efficiency, and accuracy, even with low-skew data.
This key idea also enables an algorithmic implementation of CHK with a system-aware layout and calculation optimizations.

Second, we introduce a parallel algorithmic framework for heavy-hitter detection that supports concurrent insertions and queries via \textit{domain-splitting}~\cite{Stylianopoulos2020DelegationSA} (partitioning the input and assigning each partition to a thread). 
This framework offers two complementary designs:
\emph{\paralgonamei}, targeting workloads where insertions and frequency queries are predominant compared to heavy-hitter queries, and \emph{\paralgonameq}, for scenarios where heavy-hitter queries are more frequent.
Notably, our parallel designs are compatible with any heavy-hitter detection algorithm without requiring the underlying data structures to support mergeability~\cite{agarwal2013mergeable}. This flexibility allows a wider range of algorithms to be parallelized using our approach.

Moreover, we provide analytical bounds for the algorithms, an open-source repository with their implementations~\cite{ngo_2025_15593109}, as well as a comprehensive experimental evaluation on both real-world and synthetic datasets with diverse distributions, across different hardware platforms and varying query selectivity, in comparison with an array of representative established methods in the literature.
The results show that CHK and its parallel counterparts generalize across varied workloads and offer predictable performance in practical deployments where the stream properties and memory requirements cannot be known in advance. They can process streams of diverse skewness with orders of magnitude improved accuracy, even under memory limitations. 
This improvement directly translates to parallel performance benefits --- with fewer infrequent items incorrectly identified as heavy hitters, threads need less synchronization to maintain consistency, resulting in less data movement overhead.
This implies clear benefits in both timeliness and scalability. 
All these properties make the CHK algorithm family a powerful and potentially influential component in tool-chains for large-scale data analytics.

\begin{table}[t!]
  \footnotesize
  \captionsetup{font=small}
  \captionsetup{skip=2.7pt} 
  \caption{Notation Summary for Heavy Hitters Problem}
  \label{tab:general_notation} 
  \centering
  \begin{tabularx}{\linewidth}{|l|X|} 
  \hline
  \textbf{Notation} &\multicolumn{1}{|c|}{\textbf{Description}} \\
  \hline
  $S$ & Input stream of item tuples $(a_1, \ldots, a_t, \ldots)$ \\
  $N$ & Total weighted size of the processed data stream \\
  $a_t = (e, w)$ & Tuple at timestamp/position $t$, where $e \in U$ is an item drawn from the input universe $U$ with weight $w$ \\
  $\phi$ & Heavy-hitter frequency threshold \\
  $\epsilon$ & Approximation bound for frequency estimation \\
  $\delta$ & Confidence parameter for probabilistic guarantees \\
  $f(e)$ & True frequency of item $e$ \\
  $\hat{f}(e)$ & Estimated frequency of item $e$ \\
  $R$ & Set of true heavy hitters $\{\langle e, f(e)\rangle \mid f(e) \geq \phi N\}$ \\
  $\hat{R}$ & Set of estimated heavy hitters returned  \\
  \hline
  \end{tabularx}
\end{table}

\textit{Roadmap.} 
In \S\ref{sec:prelim} we describe the problem, followed by its analysis relative to related work in \S\ref{sec:prob-analys}. 
In \S\ref{sec:seqCHK} and \S\ref{sec:approximation_bounds} we detail the CHK algorithm design and its analysis, while \S\ref{sec:parallelCHK} is about the parallel algorithm designs, also in association with related work.
In \S\ref{sec:eval} we present our detailed empirical evaluation and we conclude in \S\ref{sec:concl}.
 
\section{Problem description}
\label{sec:prelim}

Given a data stream, heavy hitters are items whose frequency exceeds a threshold $\phi$ of the processed stream size $N$.
The problem was first formally described by Misra and Gries~\cite{Misra1982FindingRE} as follows:

\noindent\textbf{$\epsilon$-$\phi$-heavy hitters:} 
Given a stream $S = (a_1, \ldots, a_t, \ldots)$ where each $a_t = (e, w)$ represents a tuple with $e$ being an item from the input universe $U$, $t$ being a timestamp, $w$ (a positive integer) being the weight of the tuple, and  $f(e)$ denoting the true frequency of item~$e$, where each update $a_t = (e, w)$ increments $f(e)$ by $w$. 
Let $N = \sum_{e \in U} f(e)$ denote the total weighted size of the processed stream,  $\phi \in (0, 1)$ denote the heavy-hitter threshold and $\epsilon \in (0, 1)$ ($\epsilon \ll \phi$) denote the approximation bound. 
Let $R = \{\langle e, f(e)\rangle \mid f(e) \geq \phi N\}$ be the set of true heavy hitters in the processed stream and $\hat{R} \subset \{\langle e, \hat{f}(e)\rangle \mid e \in U\}$ be the set of estimated heavy hitters returned by the query. 
The \emph{$\epsilon$-$\phi$-heavy hitters} must satisfy:
\begin{enumerate}[labelwidth=\widthof{\textbf{C3}:},itemindent=!,noitemsep,topsep=0pt]
\item[\textbf{C1}:]\label{cond:no-false-neg} If $f(e) \geq \phi N$, then $e \in \hat{R}$ (no false negatives)
\item[\textbf{C2}:] \label{cond:limited-false-pos} If $f(e) \leq (\phi - \epsilon) N$, then $e \notin \hat{R}$ (limited false positives)
\item[\textbf{C3}:] \label{cond:bounded-approx-dev} For each $e \in \hat{R}$, $\mid f(e) - \hat{f}(e) \mid \leq \epsilon N$ (bounded deviation)
\end{enumerate}

Let $\delta \in (0,1)$ denote the confidence parameter for probabilistic guarantees; an \textbf{$(\epsilon, \delta)$-$\phi$-heavy hitters} algorithm guarantees:

\begin{enumerate}[leftmargin=0pt,labelwidth=\widthof{\textbf{C3}:},itemindent=!,noitemsep,topsep=0pt]
\item[\textbf{C4}:] \label{cond:prob} For each condition (C1–C3), if the premise is met, then with probability at least $1-\delta$, the stated outcome holds.
\end{enumerate}

We primarily focus on the  $(\epsilon, \delta)$-$\phi$-heavy hitters problem, and use deterministic variants as baselines for theoretical and empirical comparisons. The notation summary is provided in Table~\ref{tab:general_notation}.

Heavy-hitter data structures support the following operations:

\noindent\textbf{Update(e, w):} Given an item $e \in U$ and weight $w$, processes the stream tuple $(e,w)$ and maintains necessary data structure state.

\noindent\textbf{f-Query(e):} 
returns $e$'s estimated frequency $\hat{f}(e)$. If $e$ is a heavy hitter ($e \in R$), the estimate satisfies conditions (C1-C4).

\noindent\textbf{hh-Query():} 
Returns the set $\hat{R}$ of estimated heavy hitters along with their estimated frequencies $\hat{f}(.)$. The returned results must satisfy conditions (C1-C4).

\noindent\textbf{Metrics of interest.} 
(1)~\emph{Precision} $\left(\frac{|R \cap \hat{R}|}{|\hat{R}|}\right)$ measures the fraction of reported heavy hitters that are true ones. 
(2)~\emph{Recall} $\left(\frac{|R \cap \hat{R}|}{|R|}\right)$ measures the fraction true heavy hitters identified. 
(3)~\emph{Average Relative Error (ARE)} $\left(\frac{1}{|R|}\sum_{e \in R}\frac{|f(e) - \hat{f}(e)|}{f(e)}\right)$ measures the deviation between true and estimated frequencies across true heavy hitters. 
(4)~\emph{Throughput} measures the number of operations processed per time unit. 
(5)~\emph{Query latency} measures the time to process a query. 
(6)~\emph{Memory usage} measures the space needed for data structures.

\section{Related Work and Problem Analysis}
\label{sec:prob-analys}

\subsection{Traditional approaches}
\algoemph{Key-Value (KV)-based algorithms} such as \algoemph{Frequent}~\cite{Misra1982FindingRE}, \algoemph{LossyCounting}~\cite{Manku2012ApproximateFC}, and \algoemph{Space-Saving}~\cite{Metwally2006AnIE} maintain a fixed set of approximately $1/\phi$ key-value counters  
to track the frequencies of heavy hitters. 
Because of this fixed size, only potential heavy-hitter candidates can be tracked.
Upon item arrival, if the item is already tracked, its counter is incremented; otherwise, the algorithm must either allocate a new counter (if available) or reassign an existing one. 
The returned heavy hitters set satisfies  \textbf{(C1-C3)} (cf. \S\ref{sec:prelim}). However, their estimated frequencies may suffer from significant errors due to over-/under-estimation, and/or their update operations can be computationally expensive, which can affect downstream tasks~\cite{cormode2008finding}.

\algoemph{Sketch-based algorithms}, such as the \algoemph{CountSketch} and \algoemph{Count-MinSketch}~\cite{Cormode2004AnID, Charikar2002FindingFI}, use a series of scalar counters arranged in a  $d \times w$ array.
Each row corresponds to one of $d$ pairwise-independent hash functions $h_1, h_2, \ldots, h_d$, each $h_i$ mapping items from  $U$ to one of the $w$ counters in row $i$. 
Each input tuple's item is hashed with each $h_i$  to determine which counters to update. 
The estimated frequency of an item is calculated by aggregating the counts from its hashed positions (taking the average or minimum in the aforementioned methods), satisfying conditions \textbf{C1-C4} (cf. \S\ref{cond:no-false-neg}).
However, since multiple items may be mapped to the same counter, low-frequency items may be mistakenly identified as high-frequency ones, leading to incorrect identification of heavy hitters~\cite{cormode2008finding}.

\subsection{Recent advances}
\label{sec:recent-advances}
Based on the strengths of each of the approaches, new techniques provided advances in throughput, memory usage, and accuracy.

\subsubsection{Advances regarding Throughput.} 
Sketch-based approaches achieve better throughput compared to KV-based ones, due to lower time complexity for updates, through hashing.
Later approaches target improved throughput in similar ways by combining hashing with counter-based methods, such as \algoemph{Space-Saving Heap}~\cite{cormode2008finding},  \algoemph{HeavyKeeper}~\cite{Yang2019HeavyKeeperAA}, and \algoemph{Topkapi}~\cite{Mandal2018TopkapiPA}, or by adopting faster hashing schemes like \algoemph{cuckoo hashing}~\cite{Pagh2001CuckooH} in \algoemph{CuckooCounter}~\cite{shi2023cuckoo}.

\subsubsection{Advances regarding Memory Usage.}
In many real-world data streams, only a small fraction of items appear frequently enough to become heavy hitters. 
This suggests dividing the data structure into a \textit{heavy} and a \textit{light part}, each handling either frequent or infrequent items. 
Hence, more suitable data structures with proper sizes, potentially smaller for the less important parts, can be used for each substructure.
Commonly, the \textit{heavy part} is implemented as a simple $\langle Key, Frequency \rangle$
{key-value data structure} to store heavy hitters.
When an item is inserted, the algorithm first checks the \textit{heavy part}; if the item is there, its count is updated; else, the item is inserted into a sketch which acts as the \textit{light part}.
\algoemph{HeavyGuardian}~\cite{Yang2018HeavyGuardianSA}, \algoemph{ElasticSketch}~\cite{Yang2018ElasticSA}, and \algoemph{AugmentedSketch}~\cite{roy2016augmented} adopt this approach.

\subsubsection{Advances regarding Accuracy.}
To improve the accuracy of the estimated frequency of heavy hitters, the \textit{count-with-exponential-decay} method was introduced and used in \algoemph{HeavyGuardian}~\cite{Yang2018HeavyGuardianSA} and \algoemph{HeavyKeeper}~\cite{Yang2019HeavyKeeperAA}.
This technique modifies how counter values are updated when hash collisions occur.
Unlike the key-value counter in the Frequent algorithm~\cite{Misra1982FindingRE} that uniformly decrements counters regardless of their values, the \textit{count-with-exponential-decay} lowers the counter by 1 with probability \( b^{-C} \), for every unit of the update weight, where \( b \) is a decay base and \( C \) is the current counter value.
If the item is a heavy hitter, $C$ is high, hence, the probability to decrement the counter will be lower, and the item retains a more accurate frequency value.

\subsubsection{Synergetic effects and Trade-offs}
The aforementioned techniques are different high-level concepts with a variety of algorithmic design choices and combinations.
Each choice or combination means different synergies or trade-offs.
For example, frequent/infrequent separation not only reduces the memory used but potentially improves the throughput through a fast path created by the part storing frequent items, and also has better cache behavior, as seen in \algoemph{HeavyGuardian}~\cite{Yang2018HeavyGuardianSA} and \algoemph{ElasticSketch}~\cite{Yang2018ElasticSA}.
However, these techniques can show opposite effects under different conditions. In lower-skew datasets, \algoemph{HeavyGuardian} and \algoemph{ElasticSketch} may suffer accuracy loss due to frequent collisions, and \algoemph{AugmentedSketch}~\cite{roy2016augmented} may experience reduced throughput from frequent data movement between parts.
Moreover, applying \textit{count-with-exponential-decay} on all counters regardless of the item's status can hurt throughput due to floating point operations.

\section{Sequential \algoname{}}
\label{sec:seqCHK}
By studying the strengths and multi-faceted trade-offs of different designs, we aim to balance and improve the metrics of interest by combining key algorithmic elements in a novel, system-aware fashion, that harmonizes their benefits
and enables suitability for varying input features. 
We propose a new algorithm, \emph{\algoname{}}, which represents a fundamental rethinking of the frequent/infrequent separation concept for heavy-hitter detection. 
This reinterpretation --- as will be clarified in the following --- unlocks synergies that were previously inaccessible with conventional approaches. 
We start by outlining the data structure layout, followed by a discussion of the design choices. Next, we explain the algorithm's operations, the weighted update, and other optimizations.\looseness=-1

\subsection{\algoname{} data structure layout}
Fig.~\ref{fig:cuckoo_heavy_keeper_structure} shows {the layout of} the \emph{\algoname}. 
The notation related to the data structure is summarized in Table~\ref{tab:CHK_pseudocode_notation}.
\emph{\algoname} maintains two tables $T[0]$ and $T[1]$\footnote{\algoname{} can be implemented using two separate tables or a single one with two hash functions; we use the two-table approach (as in the original cuckoo hashing work~\cite{Pagh2001CuckooH}), which guarantees distinct candidate locations.}, each consisting of an array of buckets.
Each bucket has one lobby entry for filtering infrequent items and multiple heavy entries (e.g., 2-4 per bucket, cf. \S\ref{sec:eval} for parameter selection rationale) for maintaining heavy-hitter candidates.
Each bucket's lobby entry ($T[i][idx_i].lobby$) stores a tuple $\langle {\mathit{fp}}, C \rangle$ where $\mathit{fp}$ is a fixed-size fingerprint
and $C$ is small counter implementing the \textit{count-with-exponential-decay} method.
The heavy entries ($T[i][idx_i].heavy$) use the same tuple format but with larger counters, for more precise tracking.

\begin{table}[t]
    \footnotesize
    \captionsetup{font=small, skip=2.7pt}
    \caption{Notation Summary for \algoname{}}
    \centering
    \begin{tabularx}{\linewidth}{|l|X|}
        \hline
        \textbf{Notation} & \multicolumn{1}{|c|}{\textbf{Description}}  \\
        \hline
        $T[2][]$ & Two tables of buckets \\
        $\mathit{fp}$ & Fingerprint of item $e$ \\
        $idx_0$, $idx_1$ & Indices in the hash tables computed from $e$ \\
        $T[i][idx_i]$ & Bucket at position $idx_i$ in table $i$ ($i \in \{0,1\}$) \\
        $T[i][idx_i].lobby$ & Lobby entry in bucket $T[i][idx_i]$ \\
        $T[i][idx_i].heavy$ & Array of heavy entries in bucket $T[i][idx_i]$ \\
        $\Thp$ & Promotion threshold for lobby entries \\
        $\mathcal{B}$ & Number of buckets in each table \\
        $b$ & Decay base used in counter decay \\
        $de[k]$ & Expected decays to reduce $C=k$ to 0 \\
        $\texttt{MAX\_KICKS}$ & Maximum number of kicks allowed \\
        \hline
    \end{tabularx}
    \label{tab:CHK_pseudocode_notation} 
\end{table}
\subsection{\algoname{} key ideas}

\textbf{Count-with-Exponential-Decay as a filter.} 
We observe that \textit{count-with-exponential-decay}, typically used for counting heavy items (in \textit{heavy part}) in prior work, can be repurposed as a \textit{lobby} that holds only \emph{potential} heavy hitters, as it allows frequent items to accumulate count while infrequent items fade quickly. 
Our approach \emph{inverts} the conventional workflow in frequent/infrequent separation: while algorithms~\cite{Yang2018HeavyGuardianSA, roy2016augmented, Yang2018ElasticSA} direct items first to the \textit{heavy part} and use a \textit{light part} as fallback storage when no space remains or none of the existing items in the \textit{heavy part} can be replaced, our \emph{\algoname} eliminates the \textit{light part} entirely. 
Instead, items must first go through a \textit{lobby} that filters potential heavy hitters eligible for promotion to the \textit{heavy part}. 
Once an item is identified as a heavy-hitter candidate, it is moved to the \textit{heavy part} where it is tracked accurately without decay operations.

\textbf{Collision resolution in the \textit{heavy part}.} 
Most \textit{heavy part} implementations use simple key-value data structures without hash collision handling ~\cite{Yang2018HeavyGuardianSA, roy2016augmented, Yang2018ElasticSA}, meaning that, for example, in low-memory settings, heavy items can be hashed to the same bucket, forcing one to be moved to the \textit{light part} where it is tracked less accurately. 
This motivates our use of \algoemph{cuckoo hashing}~\cite{Pagh2001CuckooH} inside the \textit{heavy part} to give colliding heavy hitters a second chance,
which significantly improves recall in diverse settings, particularly under memory constraints and low-skew distributions where collisions among true heavy hitters are more likely to occur. 
Importantly, we only resolve hash collisions for heavy-hitter candidates, which is only possible due to our repurposed \textit{lobby}. 
In contrast, algorithms like \textit{CuckooCounter}~\cite{shi2023cuckoo} resolve collisions for all items, which implies trade-offs in throughput depending on stream cardinality.
\noindent 
\begin{figure} 
    \centering 
    \footnotesize 
    \setlength{\tabcolsep}{1pt}
    \renewcommand{\arraystretch}{0.8}
    \begin{tikzpicture}[
      baseline, 
      class/.style={ 
        rectangle split, 
        rectangle split parts=2, 
        draw, 
        minimum width=1.9cm,
        inner sep=2pt,
        outer sep=0pt,
        rectangle split part fill={white,white},
        align=center, 
        rectangle split part align={center,left}
      }, 
      composition/.style={ 
        draw, 
        diamond, 
        fill=black, 
        minimum size=0.1cm, 
        inner sep=0pt,
        outer sep=0pt
      }
    ]
    
    \node[class] (keeper) {CHK \nodepart{second} tables: Table[2]};
    \node[class, right=0.6cm of keeper] (table) {Table \nodepart{second} buckets: Bucket[]};
    \node[class, right=0.6cm of table] (bucket) {Bucket \nodepart{second}\begin{tabular}[c]{@{}l@{}}lobby: LobbyEntry\\heavy: HeavyEntry[]\end{tabular}};
    \node[class, below=0.3cm of bucket] (heavy) {HeavyEntry \nodepart{second}\begin{tabular}{l}fp: uint16\\C: uint32\end{tabular}};
    \node[class, left=0.3cm of heavy] (lobby) {LobbyEntry \nodepart{second}\begin{tabular}{l}fp: uint16\\C: uint8\end{tabular}};
    
    \draw[dashed] (keeper.east) node[composition] {} -- (table.west);
    \draw[dashed] (table.east) node[composition] {} -- (bucket.west);
    \draw[dashed] (bucket.south) node[composition] {} -- (lobby.north);
    \draw[dashed] (bucket.south) node[composition] {} -- (heavy.north);
\end{tikzpicture}
\captionsetup{font=small}
\caption{
    \emph{\algoname} consists of two tables of buckets. Each bucket has one lobby entry to filter infrequent items and multiple heavy entries to maintain heavy-hitter candidates.
    } 
\label{fig:cuckoo_heavy_keeper_structure}
\Description{Class diagram showing the structure of CHK with classes Table, Bucket, LobbyEntry, and HeavyEntry. The diagram shows their relationships and attributes using UML-style notation.}
\end{figure}

\textbf{System-Aware Design.}
Most existing works focus primarily on asymptotic complexity, which cannot capture system-awareness aspects such as cache efficiency, data movement patterns, and computational costs of floating-point arithmetic. 
These factors, however, significantly influence algorithms' running time (cf.~\cite{drepper2007every} and references therein). 
While maintaining good asymptotic complexity, we also recognize the importance of system-level considerations in our design.
The bucket layout illustrated in Fig.~\ref{fig:cuckoo_heavy_keeper_structure} considers these system-level factors and offers two advantages.
First, storing the lobby entry and heavy entries together in the same bucket allows the bucket indices to be calculated only once, for both the \textit{lobby} and the \textit{heavy part}. 
Second, when accessing any entry in a bucket, the CPU cache line brings in all the entries in that bucket, which helps to check and update the frequencies of items in both parts efficiently, minimizing memory bandwidth contention. 
The approach combines the best design practices of \emph{cuckoo hashing}, which was shown analytically to improve cache utilization and increase occupancy rates~\cite{Dietzfelbinger2005BucketizedCuckoo,Fountoulakis2011CuckooProof}.
Furthermore, our design restricts the need for floating point operations of \textit{count-with-exponential-decay} within a small range, bounded by the threshold of the \textit{lobby part} (as large counters ``live" only in the \textit{heavy part}, where they are no longer subject to floating point operations). 
This enables the possibility of pre-calculation and tabulation, to replace expensive operations with simple lookups, as we elaborate in subsequent sections.

\subsection{\algoname{}  main operations}
\label{sec:chk_main_operations}
For an item $e$, the algorithm stores a fixed-size fingerprint $\mathit{fp} = fingerprint(e)$; its two possible mapped bucket indices are calculated as $idx_0 = hash(e)$ and $idx_1 = idx_0 \oplus hash(\mathit{fp})$. 
Given either $idx_0$ or $idx_1$ and the fingerprint, the other index can be derived using $XOR$ operations.
This technique, known as \emph{partial-key cuckoo hashing}~\cite{Fan2014CuckooFP}, reduces memory footprint by using smaller fingerprints while maintaining a low false positive rate for identity checks.

\setlength{\algomargin}{0.2em}
\begin{algorithm}
  \caption{{\algoname} - Main Operations}
  \label{alg:chk-main}
  \footnotesize
  \setstretch{0.8} 
  \SetKwProg{Fn}{Procedure}{}{}
  \SetKwFunction{Update}{Update}
  \SetKwFunction{Query}{f-Query}
  \SetKwFunction{GenerateFpAndIndexes}{GenerateFpAndIndexes}
  \SetKwFunction{CheckAndUpdateHeavy}{CheckAndUpdateHeavy}
  \SetKwFunction{CheckAndUpdateLobby}{CheckAndUpdateLobby}
  \SetKwFunction{DecayCounter}{DecayCounter}
  \SetKwFunction{TryPromote}{TryPromote}

  \Fn{\Update{$e, w$}}{{\label{fn:update}}
      $N \gets N + w$
      
      $\mathit{fp}, idx_0, idx_1 \gets$ \hyperref[fn:generate]{\GenerateFpAndIndexes{$e$}}\;
      
      \boxit{MaterialGreen50}{2.1}
      \tcp{Found and updated in heavy \textcolor{red}{$\triangleright$ most common and fastest path}}
      \If{\hyperref[fn:check-heavy]{\CheckAndUpdateHeavy{$\mathit{fp}, idx_0, idx_1, w$}}}{
          \Return $\hat{f}(e)$\;
          }
          
      \boxit{MaterialBlue50}{10.5}
      \tcp{Found and updated in lobby}
      \If{\hyperref[fn:check-lobby]{\CheckAndUpdateLobby{$\mathit{fp}, idx_0, idx_1, w$}}}{
          \Return $\hat{f}(e)$
      }

    \tcp{If empty lobby exists}
      \If{exists $i \in \{0,1\}$ such that $T[i][idx_i].lobby$ is empty}{
          Insert $\langle \mathit{fp}, w \rangle$ into empty $T[i][idx_i].lobby$\;
          \If{w $\geq \Thp$}{\hyperref[fn:try-promote]{\TryPromote{$T[i][idx_i]$}}}
          \Return $\hat{f}(e)$
      }
      
      \boxit{MaterialYellow50}{11.5}
      \tcp{Count with exponential decay}
          $i \gets \mathit{fp} \bmod 2$\;
          $C_{new} \gets$ \hyperref[fn:decay]{\DecayCounter{$T[i][idx_i].lobby.C, w$}}\;
          \eIf{$C_{new} = 0$}{
              Update $T[i][idx_i].lobby$ with $\langle \mathit{fp}, w - de[C] \rangle$\;
          }{
              $T[i][idx_i].lobby.C \gets C_{new}$
              
          }
          
          \If{$T[i][idx_i].lobby.C \geq \Thp$}{
              \hyperref[fn:try-promote]{\TryPromote{$T[i][idx_i]$}}
          }%
      \Return $\hat{f}(e)$\;
  }
  
  \Fn{\Query{$e$}}{{\label{fn:query}}
      $\mathit{fp}, idx_0, idx_1 \gets$ \GenerateFpAndIndexes{$e$}\;
      \Return $\hat{f}(e)$ if found, 0 otherwise\;
  }

\end{algorithm}

\begin{algorithm}
    \caption{{\algoname} - Helper Functions}
    \label{alg:chk-helpers}
    \footnotesize
    \setstretch{0.8} 
    \SetKwProg{Fn}{Procedure}{}{}
    \SetKwFunction{GenerateFpAndIndexes}{GenerateFpAndIndexes}
    \SetKwFunction{CheckAndUpdateHeavy}{CheckAndUpdateHeavy}
    \SetKwFunction{CheckAndUpdateLobby}{CheckAndUpdateLobby}
    \SetKwFunction{DecayCounter}{DecayCounter}
    \SetKwFunction{TryPromote}{TryPromote}
    \SetKwFunction{Kickout}{Kickout}

    \Fn{\GenerateFpAndIndexes{$e$}}{{\label{fn:generate}}
        $\mathit{fp} \gets fingerprint(e)$, $idx_0 \gets hash(e) \bmod \mathcal{B}$\;
        $idx_1 \gets (hash(\mathit{fp}) \oplus idx_0) \bmod \mathcal{B}$\;
        \Return $\mathit{fp}, idx_0, idx_1$\;
    }

    \Fn{\CheckAndUpdateHeavy{$\mathit{fp}, idx_0, idx_1, w$}}{{\label{fn:check-heavy}}
        Search heavy entries in both tables for matching $\mathit{fp}$\;
        \If{found matching entry or empty slot exists}{
            Update counter or insert entry
            \Return true\;
        }
        \Return false\;
    }
    
    \Fn{\CheckAndUpdateLobby{$\mathit{fp}, idx_0, idx_1, w$}}{{\label{fn:check-lobby}}
        Search lobby entries in both tables for matching $\mathit{fp}$\;
        \If{found}{
            Update lobby counter\;
            \If{counter $\geq \Thp$}{\hyperref[fn:try-promote]{\TryPromote{$T[i][idx_i]$}}}
            \Return true\;
        }
        \Return false\;
    }

    \Fn{\DecayCounter{$C, w$}}{{\label{fn:decay}}
    \boxit{MaterialYellow50}{14.3}
    \tcp{Decay with exponential probability}
    \If{$w = 1$}{
        $prob \gets b^{-C}$\;
        \Return ($Random(0,1) < prob$) ? $C - 1$ : $C$\;
    }
    
    \tcp{Weighted update}
    \If{$w > 1$ and $w < min\_decay$}{
        $min\_decay \gets de[C] - de[C-1]$\;
        $prob \gets w / min\_decay$\;
        \Return ($Random(0,1) < prob$) ? $C - 1$ : $C$\;
    }
    \If{$w \geq de[C]$}{\Return 0\;}
    \Return largest $i$ where $de[i] + w \geq de[C]$\;
    }

    \Fn{\TryPromote{bucket}}{{\label{fn:try-promote}}
    \If{empty slot exists in $bucket.heavy$}{
        Move $bucket.lobby$ to empty slot
        \Return\;
    }
    
    $min \gets$ smallest entry in $bucket.heavy$\;
        \If{$Random(0,1) < \frac{bucket.lobby.C - \Thp}{min.C - \Thp}$}{
            $evicted\_item \gets min$\;
            $min.\mathit{fp} \gets bucket.lobby.\mathit{fp}$\;
            Clear $bucket.lobby$ and \hyperref[fn:kickout]{\Kickout{$evicted\_item$}}\;
        }\Else{
            $bucket.lobby.C \gets \Thp$\;
        }
    }

    \Fn{\Kickout{entry}}{{\label{fn:kickout}}
    \For{$kicks \gets 1$ \KwTo $\texttt{MAX\_KICKS}$}{
        \If{$entry.C < \phi N$}{\Return}
        \If{empty heavy entry in alt bucket}{
            Move $entry$ to empty heavy entry
            \Return\;
        }\Else{
            Swap $entry$ with min entry in alt bucket\;
        }
    }
}
\end{algorithm}
\noindent\textbf{Update} (Alg.~\ref{alg:chk-main}):
For each input tuple $(e, w)$, the fingerprint $\mathit{fp}$ and both bucket indices $idx_0, idx_1$ for $T[0], T[1]$ are generated using independent hash functions (Alg.~\ref{alg:chk-helpers}, \hyperref[alg:chk-main]{\GenerateFpAndIndexes} function, cf.~\cite{Fan2014CuckooFP}).
The update process follows three cases:

\noindent\textbf{Case 1 - Item is tracked in \textit{heavy part}} (Alg.~\ref{alg:chk-main}, l. 4-6):  
If $\mathit{fp}$ matches an entry in $T[i][idx_i].heavy$ of either possible bucket, the algorithm increments the matched entry's counter by $w$ and returns.
This is the \emph{most common and fastest path} (green part in pseudo-code)  --- since tracked heavy-hitter candidates account for a large portion of the stream, this path will be taken most of the time.
Additionally, since all heavy entries in a bucket are in the same cache line, they can be checked and updated without additional memory accesses.

\noindent\textbf{Case 2 - Item is tracked in lobby part} (Alg.~\ref{alg:chk-main}, l. 7-9): 
If $\mathit{fp}$ matches a lobby entry, the entry's counter $C$ is incremented; if it exceeds $\Thp$, a threshold parameter, {the algorithm attempts} promotion (Alg.~\ref{alg:chk-helpers}, \hyperref[fn:try-promote]{\TryPromote} function) for the item from \textit{lobby} to \textit{heavy part}: 
    \begin{itemize}[leftmargin=*]
        \item If an empty heavy entry exists in the bucket, the item is promoted directly to the \textit{heavy part}, along with its counter. 
        \item Otherwise, {it tries} promotion, which succeeds with probability $(C - L)/(C_{min}^{heavy} - L)$ where $C_{min}^{heavy}$ is the smallest counter in the bucket's heavy entries (Fig.~\ref{fig:chk_trypromote_kickout}).
        If the promotion succeeds, the promoted item's counter becomes $C_{min}^{heavy}$ and cuckoo kickout is initiated, to relocate displaced entries (Alg.~\ref{alg:chk-helpers}, \hyperref[fn:kickout]{\Kickout} function).
        The \hyperref[fn:kickout]{\Kickout} function relocates the displaced entry to its alternate bucket ("\texttt{alt bucket}" in Alg.~\ref{alg:chk-helpers}, l.44) in the other table. If the alternate bucket has an empty heavy entry, the displaced entry moves there directly. Otherwise, it displaces the heavy entry with the minimum counter, which is then recursively kicked out. This recursive process continues until an empty heavy entry is found, or \texttt{MAX\_KICKS} attempts are reached, or the counter of a displaced item falls below $\phi N$\footnote{This threshold for viable heavy-hitter candidates can be adjusted. For example, setting it to $0$ disables the early cuckoo kickout termination.}, indicating it is no longer a viable heavy-hitter candidate.
        If the promotion fails, the item remains in the \textit{lobby part}, with its counter set to $\Thp$. 
        This probabilistic promotion ensures that higher-frequency items are more likely to be promoted to the \textit{heavy part}.
    \end{itemize}

\begin{figure*}
    \centering
    \includegraphics[width=\textwidth]{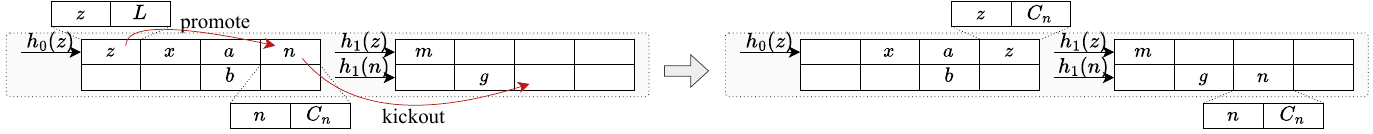}
    \captionsetup{justification=centering, font=small}
    \caption{lobby item $z$ replaces the minimum-counter item $n$ in the heavy part, which then triggers relocation via cuckoo hashing. The process may continue recursively until finding an empty slot, or reaching \texttt{MAX\_KICKS}, or encountering a below-threshold item.}
    \label{fig:chk_trypromote_kickout}
    \Description{A diagram showing the cuckoo kickout process: item z being promoted from lobby to heavy part, replacing item n, which then attempts to find space in another table through cuckoo hashing.}
\end{figure*}

\noindent\textbf{Case 3 - Item is not tracked} (Alg.~\ref{alg:chk-main}, l. 10-22): 
If there is an empty lobby entry in either of the buckets that the item is hashed to, the algorithm inserts $\langle \mathit{fp}, w \rangle$ directly, promotes it if $w \geq L$, and returns. Otherwise, it applies \textit{count-with-exponential-decay} to the target lobby entry, which is determined using \textit{modulo divsion hashing} ($\mathit{fp}\bmod2$) to ensure consistent bucket selection.
For unweighted updates ($w = 1$), the procedure decays the counter with probability $b^{-C}$, where $b$ is the decay base and $C$ is the current counter value. After the decay operation, if the counter $C=0$, the existing fingerprint in $T[i][idx_i].lobby$ is replaced with the fingerprint $\mathit{fp}$ of the incoming item $e$. 
For weighted updates ($w > 1$), the \hyperref[fn:decay]{\DecayCounter} function in Alg.~\ref{alg:chk-helpers} simulates a sequence of unweighted updates: it calculates how many decay operations would be needed for the existing counter to reach zero, and then compares this with the incoming weight. 
Based on this comparison, the algorithm either replaces the lobby item when the weight can fully decay the counter (with promotion if the remaining weight exceeds $\Thp$), or partially decrements the existing counter based on statistical projection. 
This method ensures that those with large weights can quickly establish themselves as heavy-hitter candidates. 
The analysis of weighted update behavior is presented in \S\ref{sec:weighted_update}. 
For timeliness and practical efficiency, \hyperref[alg:chk-helpers]{\DecayCounter}, in this context can be realized through tabulation, as it applies to bounded counting, making this path be very fast as well. 
This optimization is detailed in \S\ref{sec:optimizations}.

\noindent\textbf{f-Query($e$):} (Alg.~\ref{alg:chk-main}, \hyperref[alg:chk-main]{\Query} function) When querying the estimated frequency $\hat{f}(e)$ for a specific item $e$, the algorithm first computes its fingerprint $\mathit{fp}$ and bucket indices $idx_0$, $idx_1$ as in update operations.
Then, it checks all heavy entries in the corresponding buckets for a matching fingerprint.
If found, the counter value is returned; otherwise, the item is not tracked, and the query returns~0.

\noindent\textbf{hh-Query():}
The implementation can vary by the specific use case (e.g., offline or streaming environment, if queries are continuous). For this reason, we omit the pseudo-code. 
However, for completeness, we describe a common approach using an auxiliary min-heap data structure that maintains the set of heavy hitters $\hat{R}$ continuously during stream processing.
During \hyperref[alg:chk-main]{\Update}, if $\hat{f}(e) \geq \phi N$, the item is pushed or updated in the heap. 
The algorithm then repeatedly removes the root item if its frequency falls below $\phi N$ until all remaining items exceed the threshold. 
When querying heavy hitters, all items in the heap are returned with their frequencies.

\subsection{Weighted update} 
\label{sec:weighted_update} 
Most ($\epsilon, \phi$)-heavy hitters algorithms focus on unweighted updates ($w=1$).
\footnotetext[3]{Our microbenchmarks~\cite{ngo_2025_15593109} of the weighted upate, following \hyperlink{box:param-config}{\textbf{Recommended Parameter Configuration}} in \S\ref{sec:eval} ($L=16$, $b=1.08$) show that tabulation is 168$\times$ faster than repeated unweighted updates and 2.6$\times$ faster than theorem~\ref{thm.weighted_decay} formula calculation.}
However, there can be a significant benefit in the ability to handle weighted updates ($w>1$),
e.g., process aggregated data from upstream tasks in distributed processing pipelines, which can substantially reduce communication costs between system components.
When handling weighted updates, many algorithms perform multiple unweighted updates, which can degrade performance or produce incorrect results, especially in concurrent settings  (cf. \S\ref{sec:parallelCHK}) where thread interference may occur during repeated updates.
To address this limitation, we develop a weighted update extension for the \textit{count-with-exponential-decay} that we incorporate into CHK. Note that this extension can be directly adopted by other systems using similar counting techniques (e.g., Redis Top-k \cite{redis_top_k}).

Let $dc_{C,w}$ denote the counter value after applying $w$ consecutive decay operations to an initial counter value $C$.
Each single decay operation ($w=1$) sets $C=C-1$ with probability $b^{-C}$ or keeps it unchanged with probability $1-b^{-C}$, where $b$ is the decay base.
For weighted updates, we need to determine $E[dc_{C,w}]$, the expected counter value after $w$ decay operations.
\begin{theorem}\label{thm.weighted_decay}
    For a counter value $C$ with decay base $b$, where each decay operation sets the value $C=C-1$ with probability $b^{-C}$, the expected counter value $E[dc_{C,w}]$ after $w$  decay operations is:
    \begin{equation}
    E[dc_{C,w}] = \log_b \left(b^C - \frac{w(b - 1)}{b}\right)
    \label{eq:expected_C}
    \end{equation}  
\end{theorem}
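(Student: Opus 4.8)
The plan is to avoid reasoning about the counter value $X$ directly and instead track its image under the exponential map $X \mapsto b^{X}$, since the per-step decay probability $b^{-C}$ is exactly the reciprocal of $b^{C}$ and should produce a clean cancellation. First I would fix the current counter value $X$ and compute the one-step conditional expectation of $b^{X'}$, where $X'$ is the value after a single ($w=1$) decay. Writing
\[
E\!\left[b^{X'} \mid X\right] = b^{-X}\cdot b^{X-1} + \bigl(1 - b^{-X}\bigr)\cdot b^{X},
\]
the first term collapses to $1/b$ and the second to $b^{X}-1$, so that
\[
E\!\left[b^{X'} \mid X\right] = b^{X} - \frac{b-1}{b}.
\]
The decisive feature is that this decrement $(b-1)/b$ is a constant, independent of the current state $X$.

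Because the expected change in $b^{X}$ per step is state-independent, the recurrence in the exponential domain is affine with no dependence on the random history, so I would iterate it by induction on $w$ (equivalently, by the tower property): taking expectations and using $E\!\left[E[b^{X_{k}}\mid X_{k-1}]\right] = E[b^{X_{k-1}}] - (b-1)/b$ gives
\[
E\!\left[b^{dc_{C,w}}\right] = b^{C} - \frac{w(b-1)}{b},
\]
starting from the deterministic initial value $b^{X_0}=b^{C}$. Inverting the exponential map with $\log_b$ then yields the stated formula~\eqref{eq:expected_C}. As a sanity check, setting the exponential-domain expression equal to $b^{0}=1$ and solving for $w$ recovers $de[C]=(b^{C}-1)\,b/(b-1)$, the expected number of decays needed to drive the counter to zero, which is exactly the tabulated quantity the weighted-update routine relies on.

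The main subtlety --- and the step I would be most careful about --- is that linearity holds exactly for the transformed quantity $b^{X}$, not for $X$ itself, so the identity $E[dc_{C,w}] = \log_b\!\bigl(E[b^{dc_{C,w}}]\bigr)$ is a convention that treats $b^{X}$ (the decay ``mass'') as the tracked variable rather than an application of expectation that commutes with $\log_b$; by Jensen's inequality the two differ in general. I would therefore state explicitly that the quantity being propagated is the exponential-domain mass $b^{X}$, whose evolution is exact, and that the reported counter is its $\log_b$ image. A secondary point to address is the boundary at $X=0$: since the algorithm evicts (replaces the fingerprint) once the counter reaches zero, the formula is intended for the regime before depletion, i.e.\ while $b^{C} - w(b-1)/b \ge 1$; the one-step identity in fact remains valid in the exponential domain even at $X=0$, so only the interpretation of sub-unit values needs this caveat.
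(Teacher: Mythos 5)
Your proof is correct, and it is essentially the rigorous version of an argument the paper itself makes only heuristically. The paper works with $de[k]=\sum_{j=1}^{k}b^{j}=\tfrac{b(b^{k}-1)}{b-1}$, the expected number of decay attempts needed to clear a counter of value $k$, asserts that each attempt reduces this expected remaining work by exactly one, writes $de[E[dc_{C,w}]]=de[C]-w$, and inverts $de$. Since $de[X]=\tfrac{b}{b-1}\bigl(b^{X}-1\bigr)$ is an affine function of $b^{X}$, this is precisely your exponential-domain identity $E[b^{dc_{C,w}}]=b^{C}-w(b-1)/b$ in different coordinates; the two proofs are the same computation up to an affine change of variables. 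What your route buys is exactness and an honest caveat: the one-step conditional expectation $E[b^{X'}\mid X]=b^{X}-(b-1)/b$ combined with the tower property is a complete proof of the exponential-domain statement, and you correctly flag that the final step $E[dc_{C,w}]=\log_b\bigl(E[b^{dc_{C,w}}]\bigr)$ is a definitional convention rather than a consequence of linearity --- by Jensen's inequality the true expectation of the counter is smaller. The paper silently commits this same conflation (equating $de[E[dc_{C,w}]]$ with $E[de[dc_{C,w}]]$, the latter being what its ``reduces expected attempts by one'' reasoning actually controls) without comment, so your version is strictly more careful. What the paper's coordinates buy is a direct link to the implementation: $de[\cdot]$ is exactly the lookup table that the \texttt{DecayCounter} routine stores, though your sanity check recovering $de[C]=b(b^{C}-1)/(b-1)$ makes the same connection. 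Your boundary remark --- that the one-step identity persists at $X=0$ in the exponential domain, and that the formula is meaningful only before depletion, i.e.\ while $b^{C}-w(b-1)/b\ge 1$ --- addresses a point the paper ignores entirely.
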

\begin{proof}   
    Let $de[k]$ represent the expected number of decay operations needed to decrease a counter from $k$ to 0. Since each decay has probability $b^{-C}$, one successful decay requires an expected $b^C$ attempts. For $C > 0$, using geometric series sum: $de[C] = \sum_{k=1}^C b^k = \frac{b(b^C - 1)}{b - 1}$. Since each decay operation reduces the expected attempts by 1, after $w$ operations: $de[E[dc_{C,w}]] = de[C] - w$. Substituting into both sides and simplifying, we get the result.
\end{proof}

\subsection{Optimizations}
\label{sec:optimizations}
\emph{Precomputed Decay Outcomes for Weighted Updates.}
Computing $E[dc_{C,w}]$ (cf. \S\ref{sec:weighted_update}) for each update involves expensive floating-point operations. 
Recall that our design restricts these calculations to lobby counters with bounded values ($\Thp$).
This enables the possibility of pre-calculation and tabulation to replace expensive operations with simple lookups.
We can precompute in the $de[]$ array the expected number of decay operations needed to reduce a counter from $k$ to~$0$: $de[0] = 0$, and for $k = 1$ to $\Thp$, $de[k] = de[k-1] + b^k$.
When performing an update with weight $w$, if $w \geq de[C]$, that means the incoming weight is larger than the expected number of decay operations needed to reduce the existing counter $C$ to~$0$.
In this case, the incoming item's fingerprint $\mathit{fp}$ replaces the old one, the remaining weight is calculated as $w' = w - de[C]$, and \hyperref[fn:try-promote]{\TryPromote} will be triggered if $w' > \Thp$.
Otherwise, binary search can be used to find the expected value after $w$ decay operations, thus updating $C$ with $C_{new}$.
The pseudo-code for weighted and unweighted updates is highlighted in the \hyperref[fn:decay]{\DecayCounter} function (Alg.~\ref{alg:chk-helpers}).
This optimization trades $O(\Thp)$ memory for $O(\log \Thp)$ lookup time, eliminating all floating-point operations during updates while preserving the same guarantees\footnotemark{}.

\emph{Early heavy-entry placement.}
During the initial stages with empty heavy entries, forcing items through the lobby reduces accuracy due to exponential decay counting. Since heavy hitters often appear early in streams, we introduce early heavy entry placement. When an item arrives and finds an empty heavy entry, it is placed there directly. This allows precise counting of heavy hitters early on.

\section{Analysis of \algoname{}}
\label{sec:approximation_bounds}

\begin{lemma}[Heavy Hitter Promotion Guarantee]
  \label{thm:detection_bounds}
  Let $\mathcal{B}$ be the number of buckets in the hash table, $N$ be the stream size, and $e$ be any true heavy hitter $R = \{e \mid f(e) \geq \phi N\}$. Assuming no fingerprint collisions occur and the heavy part has sufficient space to accommodate promoted items without losing true heavy hitters during relocations, the probability that $e$ will be promoted to the heavy part is bounded as:
  \[
  \Pr[e \text{ is promoted to heavy part} \mid f(e) \geq \phi N] \geq 1 - \frac{1}{\phi \mathcal{B}}
  \]
\end{lemma}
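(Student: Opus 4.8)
The plan is to recognize that, under the two stated assumptions (no fingerprint collisions, and enough room in the heavy part that no true heavy hitter is lost during relocations), the \emph{only} way a true heavy hitter $e$ can fail to be promoted is if its lobby entry is persistently contended by another item strong enough to keep $e$'s counter from climbing to the promotion threshold $\Thp$. Because the \emph{count-with-exponential-decay} mechanism (Alg.~\ref{alg:chk-main}, Case~3) makes a lobby counter track the locally dominant item --- each non-matching arrival decays the counter only with probability $b^{-C}$, which is negligible once $C$ is moderate --- a light item (frequency $< \phi N$) cannot indefinitely suppress $e$. Hence the failure event reduces to: \emph{another true heavy hitter is hashed to the same bucket as $e$}. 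So the whole argument is a union bound over the competing heavy hitters, and the randomness is over the hash placement.

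First I would bound the number of competitors. Since the true frequencies of all heavy hitters sum to at most $N$ and each true heavy hitter has frequency at least $\phi N$, the set $R$ satisfies $|R| \le N/(\phi N) = 1/\phi$; excluding $e$ itself, there are at most $1/\phi - 1$ other heavy hitters. Next I would invoke the uniform-hashing assumption on $hash(\cdot)$: for any fixed other item, the probability that it maps to $e$'s bucket index (equivalently, that it contends for the same lobby entry) is at most $1/\mathcal{B}$. Taking a union bound over the fewer than $1/\phi$ competing heavy hitters gives
\[
\Pr[\text{some other heavy hitter shares } e\text{'s bucket}] \le \frac{1}{\phi}\cdot\frac{1}{\mathcal{B}} = \frac{1}{\phi\mathcal{B}},
\]
and the complementary event, in which $e$ is the unique heavy hitter in its bucket, therefore has probability at least $1 - \tfrac{1}{\phi\mathcal{B}}$. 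Combined with the claim that this complementary event forces promotion, this yields the stated bound.

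The hard part will be rigorously justifying that the complementary event actually guarantees promotion, i.e. that absent a competing heavy hitter, $e$ provably accumulates a lobby count reaching $\Thp$ and is then moved to the heavy part. The clean way to argue this uses the decay expectation from Theorem~\ref{thm.weighted_decay}: the expected number of $e$-arrivals needed to drive the counter from $0$ up to $\Thp$ is $de[\Thp] = \sum_{k=1}^{\Thp} b^{k}$, a fixed constant that is vastly smaller than $f(e) \ge \phi N$, while each interleaved light-item arrival removes at most one unit and only with the vanishing probability $b^{-C}$. The delicate point is converting this expectation-level intuition into a certainty (or high-probability) statement while controlling the \emph{collective} decay pressure of many distinct light items that may all map to $e$'s bucket; one must show their aggregate suppression cannot hold the counter below $\Thp$ over the $\Omega(\phi N)$ appearances of $e$. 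I expect this to be where the bulk of the technical care goes, whereas the counting-plus-union-bound skeleton above is routine.
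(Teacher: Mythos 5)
Your counting-plus-union-bound skeleton bounds the wrong event, and the step you defer to the end --- that light items cannot collectively suppress $e$ --- is not a technical detail but the entire content of the lemma, and it is false as you state it. While $e$ sits in the lobby its counter never exceeds $\Thp$, so every colliding arrival decrements it with probability at least $b^{-\Thp}$ (about $0.29$ for the paper's parameters $b=1.08$, $\Thp=16$); equivalently, on the order of $Z \le b^{\Thp} \approx 3.4$ colliding units suffice per unit of lost progress, \emph{regardless of how individually infrequent the colliders are}. Hence a collection of items, each with frequency just below $\phi N$ (thus ``light''), whose aggregate mass in $e$'s bucket is a small constant multiple of $f(e)$, generates enough decrements (and lobby evictions, since at $C=0$ the fingerprint is replaced) to keep $e$ from ever stabilizing at $\Thp$. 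This is a positive-probability event under random hashing, so the failure event is strictly larger than ``another true heavy hitter shares $e$'s bucket.'' Even if you proved a drift/random-walk lemma showing that light suppression requires aggregate colliding mass at least $c\phi N$ for some constant $c\ge 1$, you would still have to \emph{add} the probability of that event to your union bound over the at most $1/\phi-1$ heavy colliders, and the total then exceeds the stated constant $\frac{1}{\phi\mathcal{B}}$; nothing in your argument shows the extra term is absorbed.

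The paper's proof avoids the heavy/light case split entirely by making the suppressing mass itself the random variable: it defines $Y_{h(e)}=\sum_{e'\in U} I_{e',h(e)}\,f(e')$, the total frequency of \emph{all} items (heavy and light alike) hashed to $e$'s bucket, observes that promotion can fail only if $Y_{h(e)} \gtrsim \frac{Z+1}{Z}f(e) \ge \frac{Z+1}{Z}\phi N$ where $Z$ is the average number of collisions per decrement, and applies Markov's inequality once with $E[Y_{h(e)}]=N/\mathcal{B}$ to get $\frac{Z}{(Z+1)\phi\mathcal{B}} \le \frac{1}{\phi\mathcal{B}}$. That single Markov step is precisely the device that handles arbitrarily fragmented colliding mass, which your decomposition cannot; your bounds $|R|\le 1/\phi$ and per-item collision probability $1/\mathcal{B}$ are individually correct, but the quantity that decides promotion is total colliding mass, not the number of colliding heavy hitters.
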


\begin{proof}
  For a heavy hitter $e$ with frequency $f(e) \geq \phi N$, we analyze the worst-case scenario: when $e$ arrives, its lobby position is already occupied by another item with a counter value at $\Thp$. 
  Due to the exponential decay with base $b$, approximately $b^{\Thp}$ collisions are required to cause a single decrement. 
  In practice, $e$ only collides with a subset of items hashed to the same bucket, but for our analysis, we conservatively assume it must collide with all other items hashed there. If $f(e)$ exceeds this worst-case collision threshold, then $e$ will eventually be promoted. 
  
  Let an indicator variable $I_{e',j} = 1$ if $h(e') = j$ and~$0$  otherwise, where $h$ maps items to $\mathcal{B}$ buckets. Define $Y_j = \sum_{e' \in U} I_{e',j} \cdot f(e')$, which represents the total frequency of all items hashed to bucket $j$. Then: $E[Y_j] = \sum_{e' \in U} \frac{f(e')}{\mathcal{B}} = \frac{N}{\mathcal{B}}$. For $e$ to eventually be promoted, the key inequality is: $f(e) \geq Z \cdot \left(Y_{h(e)} - f(e)\right) + \Thp$ \label{eq:promotion_condition} where $Z$ represents the average number of collisions needed to cause a single decrement. Since $f(e) \gg \Thp$, we can rearrange to get $(Z+1)f(e) \geq Z \cdot Y_{h(e)} + \Thp$, which means promotion occurs when $Y_{h(e)} \lesssim \frac{Z+1}{Z}f(e)$. 
  
  Therefore, the probability that $e$ is not promoted is bounded by Markov's inequality using $E[Y_{h(e)}] = \frac{N}{\mathcal{B}}$:
  \begin{align}
  \Pr\left[Y_{h(e)} \geq \frac{1 + Z}{Z}f(e)\right] \leq \frac{\frac{N}{\mathcal{B}}}{\frac{1 + Z}{Z} \cdot \phi N} = \frac{1}{\mathcal{B}} \cdot \frac{Z}{(1 + Z) \cdot \phi} \leq \frac{1}{\phi \mathcal{B}} \nonumber
  \end{align}
  where the last inequality follows from $\frac{Z}{1+Z} < 1$ for any $Z > 0$. Thus, $\Pr[e \text{ is promoted to heavy part} \mid f(e) \geq \phi N] \geq 1 - \frac{1}{\phi \mathcal{B}}$.
\end{proof}

\begin{theorem}[Approximation Bounds]
  \label{thm:approximation_bounds}
  Let $\mathcal{B}$ be the number of buckets in the hash table, $N$ be the stream size, and $e$ be any item in the heavy part (including heavy hitters that appear there with high probability as shown in lemma \ref{thm:detection_bounds}). Assuming no fingerprint collisions occur and the heavy part has sufficient space to accommodate promoted items without losing true heavy hitters during relocations, for any positive $\epsilon < 1$, the probability of estimation error exceeding $\epsilon N$ is bounded as:
  \[
  \Pr[|\hat{f}(e) - f(e)| \geq \epsilon N] \leq \frac{1}{\epsilon \mathcal{B}}
  \]
\end{theorem}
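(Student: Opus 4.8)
The plan is to reuse the machinery already set up in the proof of Lemma~\ref{thm:detection_bounds}, since the quantity that controls the estimation error is again the aggregate frequency of items colliding with $e$ in its bucket. The starting observation is that, once $e$ resides in the heavy part, its counter is incremented exactly by $w$ on every arrival and is never decayed; hence any deviation $|\hat{f}(e)-f(e)|$ can only be attributed to what happened to $e$ before it reached the heavy part and to the Space-Saving-style counter it inherited upon promotion. I would isolate exactly these two error channels and show that each is dominated by the colliding mass $Y_{h(e)}-f(e)$, where $Y_{h(e)}=\sum_{e'\in U} I_{e',h(e)}\,f(e')$ is the same per-bucket load used in the lemma.

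Concretely, I would argue two one-sided bounds. First, an \emph{undercount} channel accounts for the decrements suffered by $e$'s lobby counter before promotion: each such decrement is triggered by an update of some other item hashed to the same bucket, so the total undercount is at most the number of interfering updates, i.e.\ at most $Y_{h(e)}-f(e)$. Second, an \emph{overcount} channel accounts for the inherited value $C_{min}^{heavy}$ at the moment of promotion; this inherited mass originates from previously tracked items sharing the bucket and is likewise bounded by the colliding mass. Combining the two directions yields the key inequality $|\hat{f}(e)-f(e)| \le Y_{h(e)}-f(e) \le Y_{h(e)}$, which converts the probabilistic error statement into a tail bound on the bucket load.

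The final step is a direct application of Markov's inequality, exactly as in the lemma. Using $E[Y_{h(e)}]=\frac{N}{\mathcal{B}}$ gives
\[
\Pr\!\left[|\hat{f}(e)-f(e)|\ge \epsilon N\right]\;\le\;\Pr\!\left[Y_{h(e)}\ge \epsilon N\right]\;\le\;\frac{N/\mathcal{B}}{\epsilon N}\;=\;\frac{1}{\epsilon\mathcal{B}},
\]
which is the claimed bound.

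The main obstacle I anticipate is making the key inequality $|\hat{f}(e)-f(e)|\le Y_{h(e)}$ airtight rather than heuristic. Two subtleties need care. First, the overcount channel: bounding the inherited $C_{min}^{heavy}$ by the colliding mass requires a Space-Saving-style invariant (the minimum heavy counter never exceeds the average load of the entries that contributed to it), which must be stated and maintained across cuckoo relocations. Second, because of partial-key cuckoo hashing, $e$ may alternate between its two candidate buckets $idx_0$ and $idx_1$ during \hyperref[fn:kickout]{\texttt{Kickout}}; strictly, the interfering mass is that of both candidate buckets, so one either takes a union bound over the two (absorbing a factor of $2$) or, following the single-bucket convention adopted in Lemma~\ref{thm:detection_bounds}, treats $h(e)$ as the effective bucket. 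I would flag this modeling choice explicitly so that the clean $\frac{1}{\epsilon\mathcal{B}}$ bound matches the stated form.
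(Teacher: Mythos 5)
Your proposal follows essentially the same route as the paper's proof: the paper likewise decomposes the error into decay-phase decrements $X^d(e)$, promotion-time occurrences $f^p(e)$, and the inherited counter $m$, bounds these by the bucket load $Y_{h(e)}$ with $E[Y_{h(e)}] = N/\mathcal{B}$, and finishes with Markov's inequality. If anything, your single pointwise domination $|\hat{f}(e)-f(e)| \le Y_{h(e)}$ followed by one application of Markov is marginally cleaner than the paper's two separate one-sided Markov bounds combined via a union bound, since it makes explicit that both bad events are contained in $\{Y_{h(e)} \ge \epsilon N\}$ and thus avoids any implicit factor of $2$.
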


\begin{proof}
The true frequency $f(e)$ can be decomposed as $f(e) = f^d(e) + f^p(e) + f^h(e)$, where $f^d(e)$ counts occurrences during the \emph{count-with-exponential-decay phase}, $f^p(e)$ counts occurrences during probabilistic promotion, and $f^h(e)$ counts occurrences after entering the heavy part. The algorithm estimates $\hat{f}(e) = f^d(e) - X^d(e) + m + f^h(e)$, where $X^d(e)$ represents decrements during the decay phase and $m$ is the frequency added upon promotion.

For the underestimation bound:
\begin{align} 
&&& \Pr[\hat{f}(e) \leq f(e) - \epsilon N] = \Pr[-X^d(e) + m \leq f^p(e) - \epsilon N] \nonumber \\
&=&&\Pr[X^d(e) + f^p(e) \geq \epsilon N + m] \leq Pr[X^d(e) + f^p(e) \geq \epsilon N] \nonumber \\
&\leq&& \frac{E[X^d(e) + f^p(e)]}{\epsilon N} \text{\quad(by Markov's inequality)}\label{eq:prob_bound}
\end{align}

Using $Y_{h(e)}$ as defined above, since $X^d(e) + f^p(e) \leq f^d(e) + f^p(e) = f(e) - f^h(e) \leq f(e) \leq Y_{h(e)}$, we have: $E[X^d(e) + f^p(e)] \leq E[Y_{h(e)}] = \frac{N}{\mathcal{B}}$.
Substituting into \eqref{eq:prob_bound}:
\begin{equation}
\Pr[\hat{f}(e) \leq f(e) - \epsilon N] \leq \frac{1}{\epsilon \mathcal{B}}
\label{eq:underestimate_bound}
\end{equation}

For the overestimation bound:
\begin{align} 
&&& \Pr[\hat{f}(e) \geq f(e) + \epsilon N] = \Pr[-X^d(e) + m - f^p(e) \geq \epsilon N] \nonumber \\
&\leq&& \Pr[m - f^p(e) \geq \epsilon N] \leq \Pr[m \geq \epsilon N] \leq \Pr[Y_{h(e)} \geq \epsilon N] \nonumber \\
&\leq&& \frac{E[Y_{h(e)}]}{\epsilon N} = \frac{N/\mathcal{B}}{\epsilon N} = \frac{1}{\epsilon \mathcal{B}} \text{\quad(by Markov's inequality)} \label{eq:overestimate_bound}
\end{align}

By the union bound on \eqref{eq:underestimate_bound} and \eqref{eq:overestimate_bound}, the theorem follows.
\end{proof}

From this theorem, we can establish that CHK satisfies the probabilistic heavy-hitter conditions (\textbf{C1-C4}) (cf. \S\ref{cond:no-false-neg}) with appropriate parameterization.
A key assumption for our algorithm—that "the heavy part has sufficient space to accommodate promoted items"—represents, in fact, a significant constraint for other algorithms in practice 
(cf. \S\ref{sec:eval_seq_methodology} for more detail). 
It is also worth noting that our analysis's conservative bounds do not account for all beneficial aspects of the \emph{\algoname} algorithm, such as cuckoo hashing, which creates alternative locations for items that might otherwise be lost in traditional approaches.
Indeed, in our empirical evaluation (\S\ref{sec:eval}), we demonstrate that CHK's performance consistently exceeds these theoretical guarantees.
\color{black}

\section{Concurrent operations}
\label{sec:parallelCHK}

As query operations need to execute while insertions are happening, the problem of concurrent queries and insertions poses challenging questions regarding the associated synchronization.
As outlined in the introduction, there is growing interest in the respective issues.
In this section, we outline the main results along with the algorithmic design space in synchronization of parallel operations, considering multi-threaded, shared-memory systems.

\subsection{Parallel designs and trade-offs}
\begin{table}[tbh!]
    \captionsetup{font=small, skip=2.7pt}
    \caption{Comparison of Parallel Design Categories}
    \label{tab:parallel-designs}
    \footnotesize
    \setlength{\tabcolsep}{2pt}
    \begin{tabularx}{\linewidth}{|l|X|X|X|X@{}}
    \hline
    \textbf{Design} & \makecell{\textbf{Tentative}\\\textbf{scalability}} & \makecell{\textbf{Tentative}\\\textbf{f-query rate}} & \makecell{\textbf{Tentative}\\\textbf{hh-query rate}} \\
    \hline
    Single-shared & Low & High & High \\
    Thread-local & High & - & - \\
    Peer-collaborative & High & Medium/High & Low/Medium \\
    Global-collaborative & Medium/High & - & High  \\
    \emph{Hybrid Peer-Global} & Medium/High & Medium/High & High  \\
    \hline
    \end{tabularx}
\end{table}
\vspace{0.6em}

For parallel processing in shared memory systems, several main design options exist, with properties as summarized in the paragraphs below and sketched in Table~\ref{tab:parallel-designs}.

\subsubsection{Single-shared.} 
Such a design features a single shared-memory data structure, accessible by all threads for insertions and queries.
Insert operations require synchronization mechanisms like locks or atomic operations or helping mechanisms, as in \algoemph{COTS}~\cite{das2009cots}.
In highly parallel environments with high-rate input streams, this design poses challenges regarding scaling with the number of threads.
However, queries only need to access a single data structure, potentially leading to faster responses~\footnote{Note that this depends on the synchronization method. For example, if a reader-writer lock with priority to the writers is employed, a query can starve.}.

\subsubsection{Thread-local.} 
The thread-local design assigns each thread its own local data structure. Threads insert items directly into their respective structures without synchronization, potentially facilitating scalability regarding insertions. However, this approach requires querying all thread-local data structures to collect heavy hitters, which can be inefficient for high-performance scenarios. For example, \algoemph{Topkapi}~\cite{Mandal2018TopkapiPA} implements this approach but does not support concurrent insertions/queries, instead only allowing querying at the end and requiring mergeability~\cite{agarwal2013mergeable} (ability to combine multiple sketches into one without losing accuracy), highlighting the challenge of efficiently aggregating results from multiple sources.

\subsubsection{Peer-collaborative.} 
The peer-collaborative design~\cite{Stylianopoulos2020DelegationSA} enhances the thread-local approach, through \textit{domain-splitting} and \textit{delegated operations}. Each thread is responsible for a subset of items from the universe --- a concept known as \textit{domain-splitting}. If a thread receives operations associated with another thread's domain, it buffers them and delegates these operations accordingly. This method, employed by \algoemph{QPOPSS}~\cite{jarlow2024qpopss}, maintains good scalability even with concurrent insertions and f-queries. However, hh-query consistency is relaxed, potentially introducing bounded staleness. Additionally, hh-queries still require scanning all thread-local data structures, which can introduce higher hh-query latency when the number of threads increases or when hh-queries are frequent.

\subsubsection{Global-collaborative.}
The global-collaborative design combines elements of the single-shared and thread-local approaches. Each thread periodically flushes its local heavy hitters into a single-shared data structure. This allows hh-queries to be answered quickly by accessing one location. Synchronization is less costly compared to the continuous synchronization required in the single-shared approach, since it is not needed at every update. An example of this design is \algoemph{PRIF}~\cite{Zhang2014AnEF}, although it is noteworthy that \algoemph{PRIF} permits only one dedicated thread for hh-queries and does not support f-queries, highlighting the challenges of efficiently handling both types of queries in a global-collaborative setup.

\subsection{Parallel \algoname}
\begin{table}[tbh!]
    \captionsetup{font=small, skip=2.7pt}
    \caption{Additional Notation for Parallel Versions}
    \centering
    \footnotesize
    \begin{tabularx}{\linewidth}{|l|X|}
    \hline
    \textbf{Notation} & \multicolumn{1}{|c|}{\textbf{Description}}  \\
    \hline
    $P$ & Number of parallel threads \\
    $tid$ & Thread identifier (0 to $P-1$) \\
    $ctid$, $otid$ & Current thread ID and owner thread ID for incoming item $e$ \\
    $MAX\_BUF$ & Maximum buffer size before flush \\
    $MAX\_W$ & Maximum allowed weight for buffered items \\
    $owner(e)$ & Thread assignment function $hash(e) \bmod P$ \\
    $CHK_{tid}$ & Thread $tid$'s CHK instance \\
    $B_{tid}[j]$ & Buffer of $\langle e,w \rangle$ pairs from thread $tid$ to thread $j$ \\
    $Q_{tid}$ & Queue of buffer references for thread $tid$ \\
    $PQ_{tid}[j]$ & f-Query slot $\langle e, count, flag \rangle$ from thread $j$ to $tid$ \\
    $HH$ & Global concurrent hash table of heavy hitters \\
    $N_{processed}$ & Global atomic processed stream size counter \\
    \hline
    \end{tabularx}
    \label{tab:parallel_notation}
\end{table}
\vspace{0.6em}
Studying the aforementioned parallel designs, we note multi-faceted trade-offs in terms of parallel scalability and the associated potential for f-query and hh-query efficiency. 
Hence, we seek to balance and improve upon these aspects.
To this end, we propose a general algorithmic framework for parallelizing heavy-hitter detection, with two specific designs tailored to different contexts:

\begin{itemize}[leftmargin=*]
 \item\emph{\paralgonamei}, based on the peer-collaborative design and optimized for situations where insertions and f-queries are predominant.
\item \emph{\paralgonameq}, a hybrid peer-global collaborative approach, combining elements of both peer-collaborative and global-collaborative designs, suited for scenarios where hh-queries are more frequent.
\end{itemize}

Although we use \emph{\algoname} as the underlying algorithm in the description, our parallel designs operate as a \emph{wrapper}, compatible with any heavy-hitter algorithm.
Those with native weighted update support are directly applicable, yet any other algorithm can be used by falling back to repeated unweighted updates.

We now describe first how the two algorithms perform insertion-delegations and f-queries, extending~\cite{Stylianopoulos2020DelegationSA}.
We then explore the differences between them when it comes to hh-queries. In Section~\ref{sec:eval}, we evaluate them both regarding scalability while supporting concurrent insertions, f-queries, and hh-queries, discussing the balancing properties regarding the aforementioned trade-offs. 

\vspace{0.6em}

\setlength{\algomargin}{0.2em}
\begin{algorithm}
  \caption{Parallel {\algoname} Wrapper}
  \label{alg:parallel-chk}
  \footnotesize 
    \setstretch{0.8} 
  \SetKwProg{Fn}{Procedure}{}{}
  \SetKwFunction{Update}{Update}
  \SetKwFunction{Query}{f-Query}
  \SetKwFunction{ProcessPendingUpdates}{ProcessPendingUpdates}
  \SetKwFunction{ProcessPendingQueries}{ProcessPendingf-Queries}
  \SetKwFunction{hh-Query}{hh-Query}
  
  \Fn{\Update{$e, w$}}{{\label{fn:par-update}}
      $ctid$, $otid \gets$ current thread ID, owner(e) \inlinetcp{Domain splitting}\;
      Add $\langle e, w \rangle$ to $B_{ctid}[otid]$ \inlinetcp{Buffer for delegated processing}\;
      \If{$|B_{ctid}[otid]| \geq MAX\_BUF$ or  $B_{ctid}[otid][e] \geq MAX\_W$}{
          Add reference of $B_{ctid}[otid]$ to $Q_{otid}$ \inlinetcp{Delegate to otid}\;
          \While{$B_{ctid}[otid]$ not processed}{
              \hyperref[fn:par-process-updates-i]{\ProcessPendingUpdates} \inlinetcp{Process work while waiting}\; 
          }
      }
  }
  
  \Fn{\Query{$e$}}{{\label{fn:par-query}}
      $ctid \gets$ current thread ID\;
      $otid \gets owner(e)$\;
      Store f-query $e$ in slot $PQ_{otid}[ctid]$ \inlinetcp{Delegate query to otid}\;
      \While{f-query in $PQ_{otid}[ctid]$ not processed}{
          \hyperref[fn:par-process-updates-i]{\ProcessPendingUpdates} \inlinetcp{Process work while waiting}\; 
          \hyperref[fn:par-process-queries]{\ProcessPendingQueries{}}\;
      }
      \Return result from $PQ_{otid}[ctid]$\;
  }
  
  \Fn{\ProcessPendingQueries}{{\label{fn:par-process-queries}}
      $ctid \gets$ current thread ID\;
      \For{$tid \gets 0$ \KwTo $P-1$}{
          \If{$PQ_{ctid}[tid]$ has unprocessed f-query $e$}{
              $count \gets CHK_{ctid}.$\hyperref[alg:chk-main]{\Query{$e$}} \inlinetcp{Process query}\;
              Store $count$ as result in $PQ_{ctid}[tid]$\;
              Mark f-query in $PQ_{ctid}[tid]$ as processed\;
          }
      }
  }

\end{algorithm}

\subsubsection{Insertions and f-queries}
Consider the notation in Table~\ref{tab:parallel_notation}. 
Note that subscripts (e.g., $CHK_{tid}$, $B_{tid}, PQ_{tid}$) indicate thread-local data structures that are independently allocated, which prevents false sharing.
MAX\_BUF should limit buffer size to e.g., fit within one cache line for efficient transfers between threads, while MAX\_W caps the maximum weight per buffered item to balance accuracy (cf. Theorem~\ref{thm:parallel_error_bound}).
Insertions and f-query are implemented as follows:

\noindent\textbf{Insertion} (Alg.~\ref{alg:parallel-chk}, function \hyperref[fn:par-update]{\Update}):
Insertions are delegated when thread $ctid$ receives items $e$ owned by thread $otid$.
Instead of immediate delegating, items are buffered in $B_{ctid}[otid]$.
When the buffer size $|B_{ctid}[otid]| \geq MAX\_BUF$ or the buffer per item $B_{ctid}[otid][e]$ $\geq MAX\_W$, thread $ctid$ adds a reference to the buffer into $Q_{otid}$.
Here, $Q_{tid}$ is a lock-free queue (e.g., LCRQ ~\cite{morrison2013fast}) that stores references to buffers needing processing by thread $tid$.
Delegation operations require the underlying heavy-hitter data structure to be able to handle \emph{weighted updates}, which is supported by the \emph{\algoname} algorithm (cf. \S\ref{sec:weighted_update}). 
When the buffer $B_{ctid}[otid]$ is processed by \hyperref[fn:par-process-updates-q]{\ProcessPendingUpdates}, the thread $otid$ updates its local $CHK_{ctid}$ and increments the global atomic stream size counter $N_{processed}$ by $w$.
Note that the \hyperref[fn:par-process-updates-q]{\ProcessPendingUpdates} implementation differs between the two parallel designs: in \emph{\paralgonameq} (Alg.\ref{alg:parallel-chk-qo}), it additionally identifies items that exceed the heavy-hitter threshold and adds them to a global hash table $HH$, whereas \emph{\paralgonamei} (Alg.\ref{alg:parallel-chk-io}) only performs the local updates.

\noindent\textbf{f-query} (Alg.~\ref{alg:parallel-chk}, function \hyperref[fn:par-query]{\Query}):
Similar to updates, f-queries from thread $ctid$ to $otid$ are also delegated through pending f-query slots ($PQ_{otid}[ctid]$).
Each slot stores a tuple $\langle e, count, flag \rangle$ where $e$ is the queried item, $count$ stores the result, and $flag$ indicates processing status.
When thread $ctid$ needs to f-query an item $e$ owned by thread $otid$, it initializes a slot with $\langle e, 0, unprocessed \rangle$ in $PQ_{otid}[ctid]$.
The querying thread $ctid$ monitors the $flag$ in its assigned slot until it changes to $processed$.
At the same time, instead of waiting idly, the querying thread $ctid$ continuously processes its own pending updates and queries.
Meanwhile, delegated thread $otid$ processes the f-query by querying the frequency from its local $CHK_{otid}$.
Once processed, $otid$ updates the slot with the final count and marks the $flag$ as $processed$.
This design allows continuous f-querying without blocking or freezing thread execution.

\vspace{0.6em}

\setlength{\algomargin}{0.2em}
\begin{algorithm}
  \caption{{\paralgonamei} operations}
  \label{alg:parallel-chk-io}
  \footnotesize
\setstretch{0.8} 
  \SetKwProg{Fn}{Procedure}{}{}
  \SetKwFunction{QueryHeavyHitters}{hh-Query}
  
  \Fn{\ProcessPendingUpdates}{{\label{fn:par-process-updates-i}}
      $ctid \gets$ current thread ID\;
      \If{cannot acquire lock on $ctid$'s data}{
          \Return\;
      }
      \While{$Q_{ctid}$ has unprocessed references}{
          $B_{ref} \gets$ get next unprocessed reference from $Q_{ctid}$\;
          \ForEach{$\langle e, w \rangle \in B_{ref}$}{
              $CHK_{ctid}.$\hyperref[alg:chk-main]{\Update{$e, w$}} \inlinetcp{Delegated updates}\;
              Atomic add $w$ to $N_{processed}$ \inlinetcp{Track stream size}\;
          }
          Mark $B_{ref}$ as processed in $Q_{ctid}$\;
      }
      release lock on thread $ctid$'s data\;
  }

    \Fn{\QueryHeavyHitters}{{\label{fn:par-query-heavy-i}}
    $\hat{R}, remaining, made\_progress \gets \emptyset, P, false$\;
    \While{$remaining > 0$}{
        $made\_progress \gets false$\;
        \For{$tid \gets 0$ \KwTo $P-1$}{
            \If{$tid$'s data is not scanned and can acquire lock}{
                $cand \gets CHK_{tid}$.hh-Query()\;
                \ForEach{$\langle e, count \rangle \in cand$}{
                    \If{$count \geq \phi N_{processed} $}{
                        Add $\langle e, count \rangle$ to $\hat{R}$\;
                    }
                }
                Mark thread $tid$'s data as scanned and release lock\;
                $remaining, made\_progress \gets remaining - 1, true$\;
            }
        }
        \If{$remaining > 0$ and not $made\_progress$}{
            \ProcessPendingUpdates\; 
        }
    }
    \Return $\hat{R}$\;
    }
\end{algorithm}

\setlength{\algomargin}{0.2em}
\begin{algorithm}
  \caption{{\paralgonameq} operations}
  \label{alg:parallel-chk-qo}
  \footnotesize
    \setstretch{0.8} 
  \SetKwProg{Fn}{Procedure}{}{}
  \SetKwFunction{ProcessPendingUpdates}{ProcessPendingUpdates}
  \SetKwFunction{QueryHeavyHitters}{hh-Query}
  
   \Fn{\ProcessPendingUpdates}{{\label{fn:par-process-updates-q}}
      $ctid \gets$ current thread ID\;
      \While{$Q_{ctid}$ has unprocessed references}{
          $B_{ref} \gets$ get next unprocessed reference from $Q_{ctid}$\; 
          \ForEach{$\langle e, w \rangle \in B_{ref}$}{
              $count \gets$ $CHK_{ctid}.$\hyperref[alg:chk-main]{\Update{$e, w$}} \inlinetcp{Delegated updates}\;
              Atomic add $w$ to $N_{processed}$ \inlinetcp{Track stream size}\;
              \If{$count \geq \phi N_{processed} $}{
                  Update $\langle e, count \rangle$ in $HH$ \inlinetcp{Maintain global HH}\;
              }
          }
          Mark $B_{ref}$ as processed\;
      }
  }

  \Fn{\QueryHeavyHitters}{{\label{fn:par-query-heavy-q}}
  
  $\hat{R} \gets \emptyset$\;
    \tcp*[h]{HH: Global concurrent hash table of heavy hitters}
    
      \ForEach{entry position $i$ in $HH$}{
        \Repeat{$e_1 = e_2$ and $count_1 = count_2$ \inlinetcp{double collecting}}{
            $\langle e_1, count_1 \rangle \gets HH[i]$ \inlinetcp{First read}\;
            $\langle e_2, count_2 \rangle \gets HH[i]$ \inlinetcp{Second read for consistency}\;
        }
        \If{$count_1 \geq \phi N_{processed}$}{
            Add $\langle e_1, count_1 \rangle$ to $\hat{R}$\;
        }
        }
      \Return $\hat{R}$\;
  }

\end{algorithm}

\subsubsection{hh-queries.} 
\hfill\\
\noindent\textbf{\paralgonamei} (Alg.~\ref{alg:parallel-chk-io}): 
When a hh-query is executed, the algorithm performs a non-blocking scan across threads' local $CHK_{tid}$ structures to collect items whose counts exceed the threshold ($count \geq \phi N_{processed} $) into the result set $\hat{R}$. For thread safety, mCHK-I uses opportunistic thread-level locking --- if a thread's lock cannot be acquired immediately, the algorithm continues scanning other threads and processes pending updates, before retrying locked threads later. As this is a low-contention locking when the hh-query rate is not too high, it potentially does not cause a high number of retries.

\noindent\textbf{\paralgonameq} (Alg.~\ref{alg:parallel-chk-qo}): 
It improves hh-query latency and overall throughput under frequent hh-queries by maintaining a global concurrent hash table $HH$ (e.g., libcuckoo~\cite{li2014algorithmic}) for heavy hitters. When thread $ctid$ processes updates (Alg.~\ref{alg:parallel-chk-qo}, \hyperref[fn:par-process-updates-q]{\ProcessPendingUpdates}), items exceeding the threshold ($count \geq \phi N_{processed} $) are added to $HH$. Although this introduces synchronization overhead, the cost is amortized over multiple updates due to buffering. 
When the hh-query is executed, the algorithm performs a non-blocking scan of $HH$ to collect items whose counts exceed the threshold ($count \geq \phi N_{processed} $), using double-collecting~\cite{afek1993atomic}
(Alg.~\ref{alg:parallel-chk-qo}, l. 13-17, repeatedly collecting the data twice until values match)
to avoid torn reads and adding them to the result set $\hat{R}$.

\subsection{Accuracy of hh-queries}
Consider the global state of the algorithm at a point in time, applicable to both \emph{mCHK-I} and \emph{mCHK-Q}.
Let $N_s$ be the total weighted size of the processed data stream at the start of the query, and  $\mathbb{B}(e)$ be the total buffered weight of item $e$ i.e., the weight of $e$ that has not yet been processed by the algorithm ($\mathbb{B}(e) \leq P \times MAX\_W$, since each thread can buffer at most $MAX\_W$ weight for any item).

\begin{theorem}[Parallel Approximation Bound]
\label{thm:parallel_error_bound}
Let $e$ be any item in the heavy part, $f_{N_s}(e)$ be the true frequency of $e$ up to $N_s$; the estimated frequency $\hat{f}(e)$ from hh-query satisfies:
\[
    \Pr\left[ |\hat{f}(e) - \left(f_{N_s}(e) -  \mathbb{B}(e)\right)| \geq  \epsilon N_s \right] \leq \frac{1}{\epsilon \mathcal{B}}
\]
\end{theorem}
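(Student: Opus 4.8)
The plan is to reduce the parallel bound to the sequential approximation bound (Theorem~\ref{thm:approximation_bounds}) by exploiting the domain-splitting invariant. The key observation is that $owner(e) = hash(e) \bmod P$ is deterministic, so every occurrence of a fixed item $e$ is delegated to one and the same owner thread $otid$, and is eventually absorbed into that thread's private instance $CHK_{otid}$. Consequently, from the point of view of $e$, the only structure that matters is the single \emph{sequential} CHK instance owned by $otid$, which processes a well-defined substream of the global stream. The whole argument is then: identify the right local true frequency, invoke the sequential theorem on $CHK_{otid}$, and relax the local stream size to $N_s$.

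First I would make the frequency decomposition precise. At the start of the query, $f_{N_s}(e)$ counts every occurrence of $e$ in the processed prefix of size $N_s$. Of these, exactly $\mathbb{B}(e)$ units still sit in per-thread buffers and have not yet been applied to $CHK_{otid}$, while the remaining $f_{N_s}(e) - \mathbb{B}(e)$ units have already been processed by $CHK_{otid}$. Writing $f_{otid}(e)$ for the true frequency of $e$ as seen by the owner's local instance, this gives the exact identity $f_{otid}(e) = f_{N_s}(e) - \mathbb{B}(e)$. This is precisely what turns the shifted target $f_{N_s}(e) - \mathbb{B}(e)$ appearing in the statement into the ordinary local true frequency to which Theorem~\ref{thm:approximation_bounds} applies.

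Next I would apply the sequential result to $CHK_{otid}$ as a black box. Let $N_{otid}$ be the total weight processed by thread $otid$ at query time; since $otid$ handles only items it owns, $N_{otid} \leq N_s$. Theorem~\ref{thm:approximation_bounds}, applied to the sequential instance $CHK_{otid}$ with its own stream size $N_{otid}$, gives directly
\[
\Pr\!\left[\,|\hat{f}(e) - f_{otid}(e)| \geq \epsilon N_{otid}\,\right] \leq \frac{1}{\epsilon \mathcal{B}}.
\]
Because $N_{otid} \leq N_s$ we have $\epsilon N_s \geq \epsilon N_{otid}$, so the event $\{|\hat{f}(e) - f_{otid}(e)| \geq \epsilon N_s\}$ is contained in $\{|\hat{f}(e) - f_{otid}(e)| \geq \epsilon N_{otid}\}$; monotonicity of probability, together with the substitution $f_{otid}(e) = f_{N_s}(e) - \mathbb{B}(e)$, then yields the claimed bound (and in fact the global bound is only looser than the local one).

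The step I expect to require the most care is not the probabilistic estimate, which is inherited almost for free, but two justifications around the black-box application. The first and main one is that the value $\hat{f}(e)$ read during a \emph{concurrent} hh-query corresponds to a single consistent state of $CHK_{otid}$ (and of $HH$ in \paralgonameq{}) rather than a torn read produced by an in-flight update. I would argue this per design: for \paralgonameq{} the double-collecting loop (Alg.~\ref{alg:parallel-chk-qo}) re-reads each slot until two successive reads agree, so the returned count is the value at some atomic instant; for \paralgonamei{} the opportunistic per-thread locking in the scan (Alg.~\ref{alg:parallel-chk-io}) grants exclusive access to $CHK_{tid}$ while its candidates are collected. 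The second, more technical point is that invoking Theorem~\ref{thm:approximation_bounds} on the sub-population owned by $otid$ implicitly reuses its uniform-hashing hypothesis for that sub-population; I would note that this is legitimate provided the owner-assignment hash is treated as independent of the CHK-internal bucket hash, so that the in-bucket load inside $CHK_{otid}$ still has expectation $N_{otid}/\mathcal{B}$. With both points settled, the estimate fed into the bound is the deterministic output of a genuine sequential CHK state, which is exactly the hypothesis under which the reduction above is valid.
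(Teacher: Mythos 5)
Your proposal is correct and takes essentially the same route as the paper: the paper's own (two-sentence) proof likewise observes that at query time the owner thread's sequential CHK instance has processed exactly $f_{N_s}(e) - \mathbb{B}(e)$ weight of item $e$, and then applies Theorem~\ref{thm:approximation_bounds} with this adjusted frequency. Your write-up is in fact more careful than the paper's, since you make explicit the domain-splitting invariant, the gap between the owner's local stream size and the global $N_s$ (handled via $N_{otid} \leq N_s$ and monotonicity of the deviation event), and the consistency of concurrently read values --- all of which the paper leaves implicit.
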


\begin{proof}
At query time, the underlying sequential \emph{\algoname{}} has processed $f_{N_s}(e) - \mathbb{B}(e)$ weight for item $e$. Applying theorem~\ref{thm:approximation_bounds} with this adjusted frequency yields the result.
This implies an important practical trade-off: smaller $MAX\_W$ and $MAX\_BUF$ values reduce buffering delay (improving accuracy) but increase synchronization frequency (reducing throughput).
\end{proof}

\section{Evaluation} 
\label{sec:eval}

This section presents a comprehensive evaluation of the contributed methods, organized in two parts: (1)~evaluation of  \emph{CHK} on both real-world  and synthetic data, compared to state-of-the-art algorithms, focusing on accuracy and throughput, with varying memory constraints and data distributions (skewness); 
(2)~evaluation of our parallel designs on varying hardware features, studying scalability with thread counts and hh-query rates; the study includes both \emph{CHK} and alternative underlying heavy-hitter detection algorithms.
We begin by describing our experiment setup, including hardware specifications, datasets, and evaluation metrics.

\noindent\textbf{Platforms:} 
We conducted experiments on two hardware platforms. Table~\ref{tab:platforms} provides the specifications:

\begin{table}[tbh]
  \captionsetup{font=small, skip=2.7pt}
  \caption{Hardware Platform Specifications}
  \label{tab:platforms}
  \footnotesize
  \setlength{\tabcolsep}{4pt}
  \begin{tabularx}{\linewidth}{|l|X|X|}
  \hline
  & \makecell{\textbf{Platform A}} & \makecell{\textbf{Platform B}} \\
  \hline
  \textbf{Processor} & \makecell{Intel(R) Xeon(R) E5-2695 v4\\(NUMA dual sockets)} & \makecell{AMD EPYC 9754\\(UMA single socket)} \\
  \hline
  \textbf{Cores/Clock} & \makecell{36 cores/2.1 GHz} & \makecell{128 cores/2.25 GHz} \\
  \hline
  \textbf{Hyper Threading} & \makecell{Enabled} & \makecell{Disabled} \\
  \hline
  \textbf{Cache (L1/L2/L3)} & \makecell{32KB/256KB/45MB} & \makecell{32KB/1024KB/256MB} \\
  \hline
  \textbf{Used for} & \makecell{All experiments}  & \makecell{Parallel experiments only} \\
  \hline
  \end{tabularx}
\end{table}

\noindent\textbf{Data sets:} 
We used three datasets: 
(1) \textit{CAIDA\_L}, a real-world dataset of source IPs (skewness $\sim$ 1.2); 
(2) \textit{CAIDA\_H}, a real-world dataset of source ports (skewness $\sim$ 1.5).
Both are derived from the CAIDA Anonymized Internet Traces 2018 ~\cite{caida2018passive}, a broadly used benchmark in heavy-hitter detection literature~\cite{Yang2019HeavyKeeperAA, Yang2018ElasticSA, shi2023cuckoo}. From these traces, we extracted source IPs and source ports from the first 10M packets, representing network traffic monitoring scenarios faced in real-world environments with different skewness.
(3) \textit{Synthetic data} containing 10M items generated using Zipf distributions with skewness $\alpha$ ranging from 0.8 to 1.6, commonly used to model real-world frequency distributions~\cite{Newman01092005, Stylianopoulos2020DelegationSA}.

\noindent\textbf{Baselines:}
We compared \emph{CHK} against representative state-of-the-art heavy-hitter detection algorithms: \algoemph{Space-Saving} (SS) ~\cite{Metwally2006AnIE}, \algoemph{Count-MinSketch} (CMS)~\cite{Cormode2004AnID}, \algoemph{AugmentedSketch} (AS) ~\cite{roy2016augmented}  and \algoemph{HeavyKeeper} (HK)~\cite{Yang2019HeavyKeeperAA}, all implemented in C++, compiled with -O2 and available in our repository~\cite{ngo_2025_15593109}.
CHK uses parameters from \hyperlink{box:param-config}{\textbf{Recommended Parameter Configuration}} (cf. \S\ref{sec:eval_seq_methodology} for experiment setup), while others follow their prescribed recommendations.
All algorithms use auxiliary heaps for continuous heavy-hitter tracking (cf. \S\ref{sec:chk_main_operations}), as needed in real-world monitoring scenarios.

\hypertarget{box:param-config}{}
\tcolorbox[
  colback=white,
  colframe=MaterialOrange400,
  fonttitle=\small\bfseries,
  fontupper=\small,
  left=1pt,
  right=1pt,
  top=1pt,
  bottom=1pt,
  toptitle=0pt,       
  bottomtitle=0pt,    
  titlerule=0.4pt,    
  toprule=0.4pt,      
  title=Recommended Parameter Configuration,
]

\textbf{Bucket Configuration:} Fig.~\ref{fig:parameter_sensitivity} shows that 2 heavy entries per bucket achieve better accuracy, which aligns with prior research on cuckoo hashing~\cite{Dietzfelbinger2005BucketizedCuckoo,Fan2014CuckooFP,Fountoulakis2011CuckooProof} showing 2-4 entries maximizes space efficiency. We use a promotion threshold $L=16$ based on sensitivity analysis in Fig.~\ref{fig:parameter_sensitivity}. For counter sizes, we use 8-bit counters in the \textit{lobby part} (sufficient for $L$) and 32-bit counters in the \textit{heavy part} (to track frequencies of heavy items). This also ensures each bucket fits within a CPU cache line for optimal memory access performance.
\textbf{Other Parameters:} 16-bit fingerprints following~\cite{Fan2014CuckooFP} and decay factor $b=1.08$ following~\cite{Yang2019HeavyKeeperAA}.

\endtcolorbox
\begin{figure}[tbh]
  \centering
  \includegraphics[width=\linewidth]{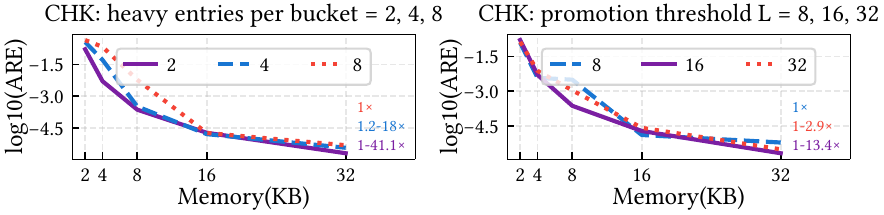}
  \captionsetup{skip=2pt, font=small}
  \caption{\vinhrevise{}{Sensitivity to decay base and heavy entries per bucket}}
  \label{fig:parameter_sensitivity}
  \Description{\emph{\algoname} sensitivity to decay base and heavy entries per bucket}
\end{figure}

\subsection{Study of the sequential algorithms}

\subsubsection{Measurement Methodology}
\label{sec:eval_seq_methodology}
We use four metrics: precision, recall, ARE, and throughput as defined in \S\ref{sec:prelim}.
We vary different parameters depending on the dataset type.
With synthetic data, we vary $\phi$, memory usage, and skewness $\alpha$, modifying one at a time.
For the CAIDA datasets, where skewness is an inherent characteristic, we vary only $\phi$ and memory usage.
When parameters are not varying, their values were set to: threshold $\phi$ of 0.0005, memory usage of 4KB, and skewness of 1.2, with the following reasoning:
the skewness is a value in the range of real-world network traffic distribution~\cite{caida2018passive}, without favoring algorithms optimized for either high or low skew; 
$\phi$ corresponds to a rather low threshold, i.e., with low selectivity, matching cases of many heavy hitters, while memory is set to be just adequate to keep them.
This enables evaluating the algorithms under limited memory rather than simply testing performance under abundant resources. Besides, memory size matters when the information is to be communicated (cf. e.g.~\cite{harris2023compressing}), so the smaller the sketch, the better.
Each experiment was run 30 times, with average performance calculated and plotted. 
To enable easier comparisons, we also reported \emph{relative improvements} rather than just plain absolute values. 
For each configuration, we calculated each algorithm's improvement normalized over the least-performing one and presented these numbers in sorted order, based on the average of point-wise improvement ratio.

\begin{figure}[tbh!]
  \centering
  \includegraphics[width=\linewidth]{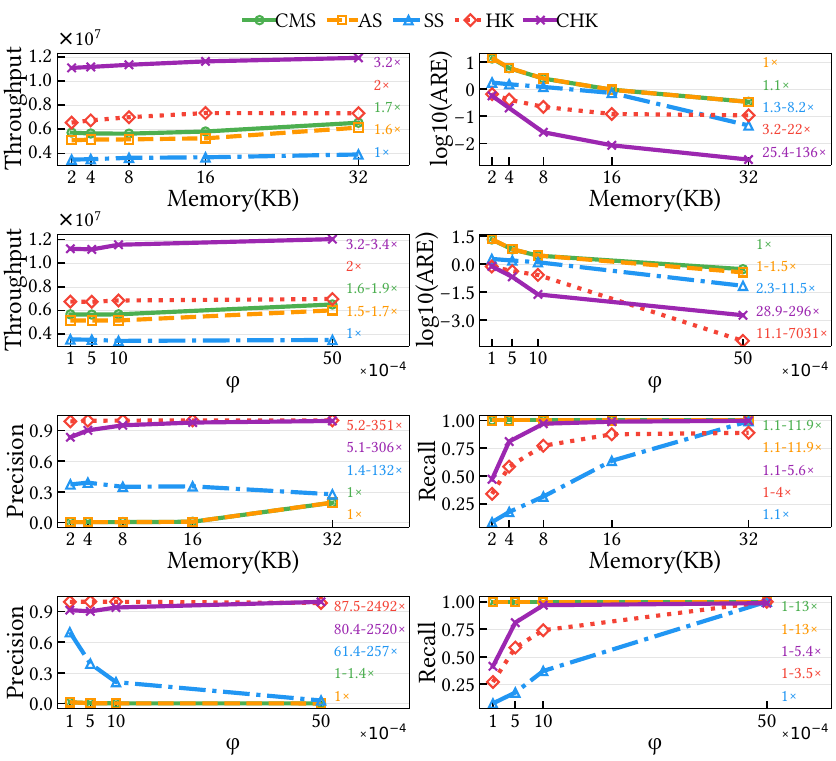}
  \captionsetup{skip=2pt, font=small}
  \caption{
  Plots show performance of sequential algorithms on \textit{CAIDA\_L} dataset for throughput, $log_{10}$(ARE), precision, and recall across varying memory, and $\phi$. See \S\ref{sec:eval_seq_methodology} for calculation details.
  }
  \label{fig:caida_l}
  \Description{
  Plots show performance of sequential algorithms on \textit{CAIDA\_L} dataset for throughput, $log_{10}$(ARE), precision, and recall across varying memory, and $\phi$. See \S\ref{sec:eval_seq_methodology} for calculation details.
  }
\end{figure}

\setlength{\floatsep}{0pt} 
\begin{figure}[tbh!]
  \centering
  \includegraphics[width=\linewidth]{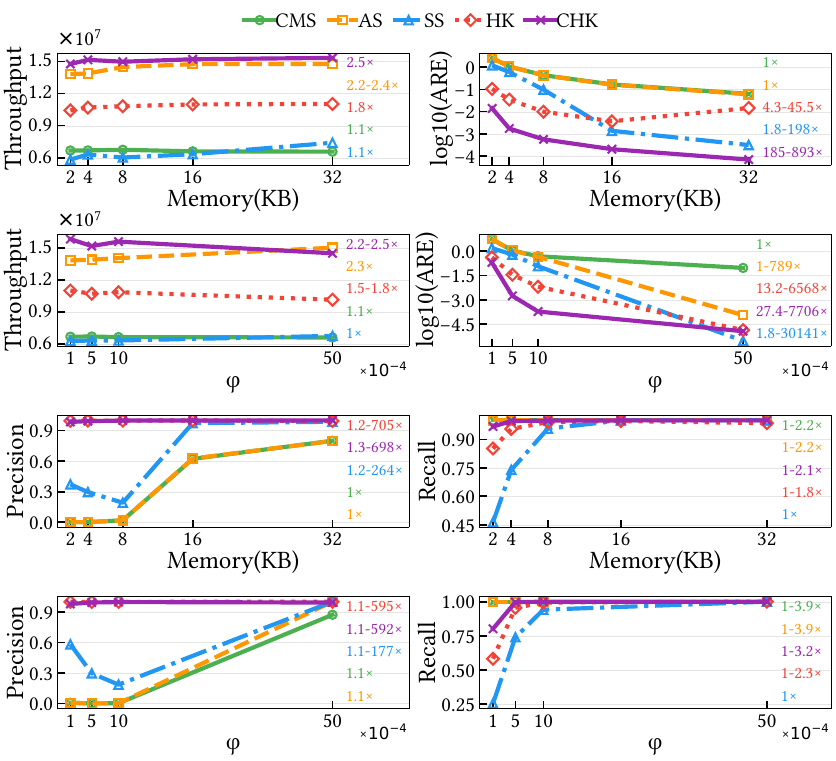}
  \captionsetup{skip=2pt, font=small}
  \caption{
  Plots show performance of sequential algorithms on \textit{CAIDA\_H} dataset for throughput, $log_{10}$(ARE), precision, and recall across varying memory, and $\phi$. See \S\ref{sec:eval_seq_methodology} for calculation details.
  }
  \label{fig:caida_h}
  \Description{Plots show performance of sequential algorithms on \textit{CAIDA\_H} dataset for throughput, $log_{10}$(ARE), precision, and recall across  varying memory, and $\phi$. See \S\ref{sec:eval_seq_methodology} for calculation details.}
\end{figure}

\subsubsection{Experiments on throughput}
\label{sec:eval_seq_throughput}
In streaming heavy-hitter algorithms, throughput is influenced by configuration parameters and dataset characteristics. 
Limited memory increases collisions, requiring more hash resolution operations; low skewness leads to more items competing for slots; and low thresholds classify more items as heavy hitters, increasing heap maintenance overhead. 
These factors reduce throughput by triggering more expensive algorithm paths.
Our results across synthetic and real-world datasets (Fig.~\ref{fig:caida_l}, Fig.~\ref{fig:caida_h}, and Fig.~\ref{fig:zipf}) confirm these patterns
and show the benefits of the CHK design, which is summarized in \hyperlink{box:key-takeaway1}{Key Takeaway 1}.

\hypertarget{box:key-takeaway1}{}
\tcolorbox[
  colback=white,
  colframe=MaterialOrange400,
  fonttitle=\small\bfseries,
  fontupper=\small,
  left=1pt,
  right=1pt,
  top=1pt,
  bottom=1pt,
  toptitle=0pt,       
  bottomtitle=0pt,    
  titlerule=0.4pt,    
  toprule=0.4pt,      
  title=Key Takeaway 1 -- on sequential CHK throughput,
]
CHK consistently delivers superior throughput across all tested configurations, by 2-3 times in most settings. 
While some methods, e.g., AS, demonstrate occasional performance spikes under high skew conditions (when few items dominate the traffic), they still underperform compared to CHK slightly, even in these favorable scenarios, and degrade substantially under low skew distributions.
The improvements are possible through CHK's inverted filter (lobby) process and its enabled system-aware layout, where data movement is limited (and often within the same cache line), and hash collision resolution is applied selectively only to heavy-hitter candidates.
As a result, CHK generalizes across varied workloads and offers more predictable performance in diverse scenarios, where stream characteristics cannot be known in advance and memory requirements are difficult to determine ahead of time.
\endtcolorbox

\subsubsection{Experiments on accuracy}
Experiments on accuracy (precision, recall, and ARE of Fig.~\ref{fig:caida_l}, Fig.~\ref{fig:caida_h}, and Fig.~\ref{fig:zipf}) show varying behavior across different algorithms and configurations. 
In general, each algorithm shows similar improvement trends when varying parameters (for reasons similar to those discussed in \S\ref{sec:eval_seq_throughput}), but the magnitude of improvement differs. 
At low memory, low skewness, or low threshold settings, algorithms exhibit distinct trade-offs: CMS/AS/SS achieve high recall but poor precision because they overestimate frequencies indiscriminately, which allows them to find most heavy hitters but results in many false positives.
HK achieves high precision but lower recall as multiple heavy items can be mapped to the same bucket, causing collisions and omissions of true heavy hitters.
CHK can balance, maintaining higher precision than CMS/AS/SS while delivering better recall than HK in most settings, as explained in \hyperlink{box:key-takeaway2}{Key Takeaway 2}.

\hypertarget{box:key-takeaway2}{}
\tcolorbox[
  colback=white,
  colframe=MaterialOrange400,
  fonttitle=\small\bfseries,
  fontupper=\small,
  left=1pt,
  right=1pt,
  top=1pt,
  bottom=1pt,
  toptitle=0pt,       
  bottomtitle=0pt,    
  titlerule=0.4pt,    
  toprule=0.4pt,      
  title=Key Takeaway 2 -- on sequential CHK accuracy,
]

Under constrained conditions (low memory usage or low skewness), CHK
improves accuracy by a large margin. This is because CHK uses cuckoo collision resolution, which allows it to capture more heavy hitters, resulting in better recall, while maintaining a low false positive rate even under stringent memory constraints.

\endtcolorbox
\begin{figure}[tbh!]
  \centering
  \includegraphics[width=\linewidth]{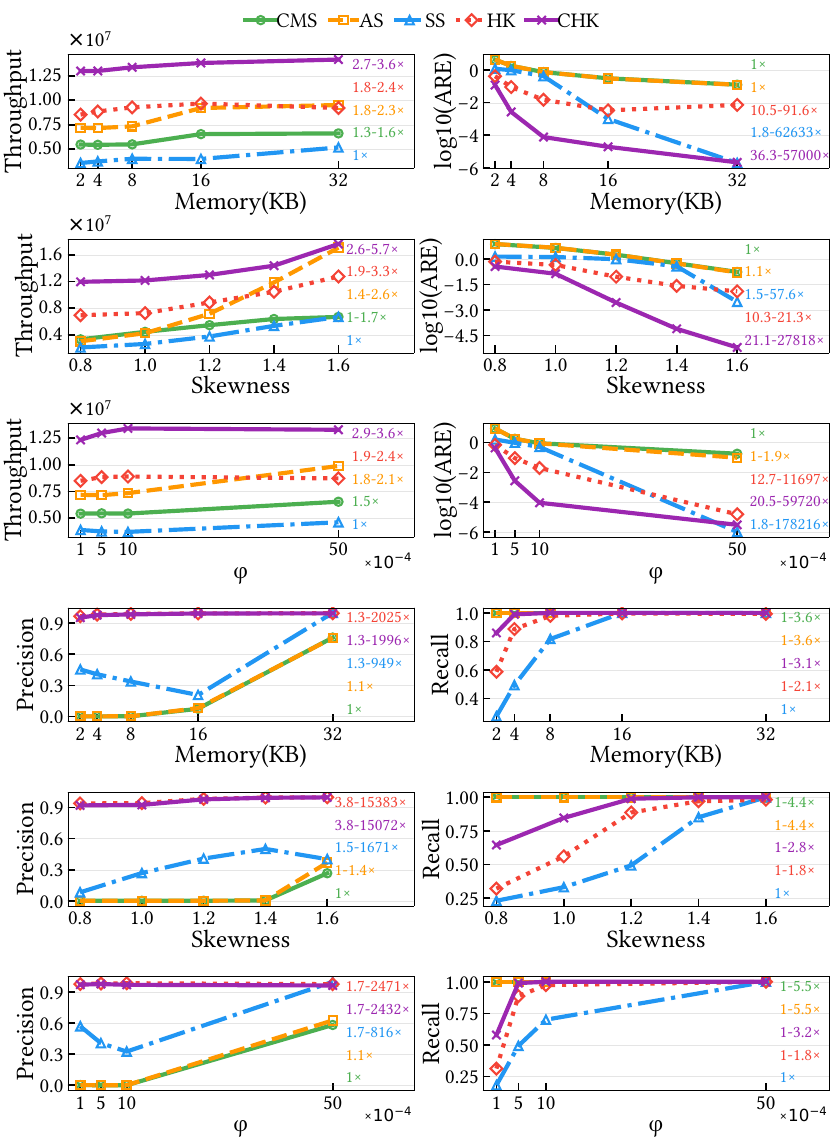}
  \captionsetup{skip=2pt, font=small}
  \caption{
  Plots show performance of sequential algorithms on \textit{Synthetic data} for throughput, $log_{10}$(ARE), precision, and recall across varying skewness, memory, and $\phi$. See \S\ref{sec:eval_seq_methodology} for calculation details.}
  \label{fig:zipf}
  \Description{
  Plots show performance of sequential algorithms on \textit{Synthetic data} for throughput, $log_{10}$(ARE), precision, and recall across varying skewness, memory, and $\phi$. See \S\ref{sec:eval_seq_methodology} for calculation details.
  }
\end{figure}

\subsection{Study of the parallel algorithms}
\subsubsection{Measurement Methodology}
We evaluated our parallel framework across multiple dimensions: hardware platforms (\textit{Platform A} and \textit{B}), underlying heavy-hitter detection algorithms, thread counts, and hh-query rates (ratio of hh-queries vs total operations performed by each thread),
for insertion throughput and hh-query latency (\S\ref{sec:prelim}). 
All experiments used the \textit{CAIDA\_H} dataset.
Since both variants handle f-queries similarly via the \textit{delegation mechanism}, we focus the comparison on their divergent approaches to hh-queries.
To evaluate the framework's generality, we used various underlying algorithms that natively support weighted updates (CHK, AS, CMS, SS; HK was not included since it lacks this feature). 
The resulting parallel variants are denoted as \emph{\paralgonamei}, \emph{\paralgonameq}, mAS-I, mAS-Q, mCMS-I, mCMS-Q, mSS-I, and mSS-Q, with -I and -Q indicating insertion- and query-optimization.
With similar reasoning as in subsection~\ref{sec:eval_seq_methodology}, $\phi$ = 0.00005 and memory = 1KB per thread, with algorithm parameters $MAX\_W$ = 1000 and $MAX\_BUF$ = 16, to simulate a resource-constrained environment.
Each experiment ran 30 times, with results showing both average performance and relative speedup compared to single-thread implementations.

\subsubsection{Experiments on throughput}
Fig.~\ref{fig:par_caida_H_throughput} shows the average throughput of parallel variants with varying thread counts and hh-query rates across different underlying heavy-hitter detection algorithms.
Both designs scale nearly linearly as thread count increases.
When comparing the query-optimized (-Q) and insertion-optimized (-I) variants, we observe patterns consistent with the design trade-offs discussed in \S\ref{sec:parallelCHK}. 
The -I variants outperform their -Q counterparts in insertion-only workloads (0\% query rate) due to their reduced synchronization overhead. However, as query rates increase, the -Q variants demonstrate significantly better performance. 
Notable results are summarized in \hyperlink{box:key-takeaway3}{Key Takeaway 3} and \hyperlink{box:key-takeaway4}{Key Takeaway 4}.

\begin{figure}[tbh!]
  \centering
  \includegraphics[width=\linewidth]{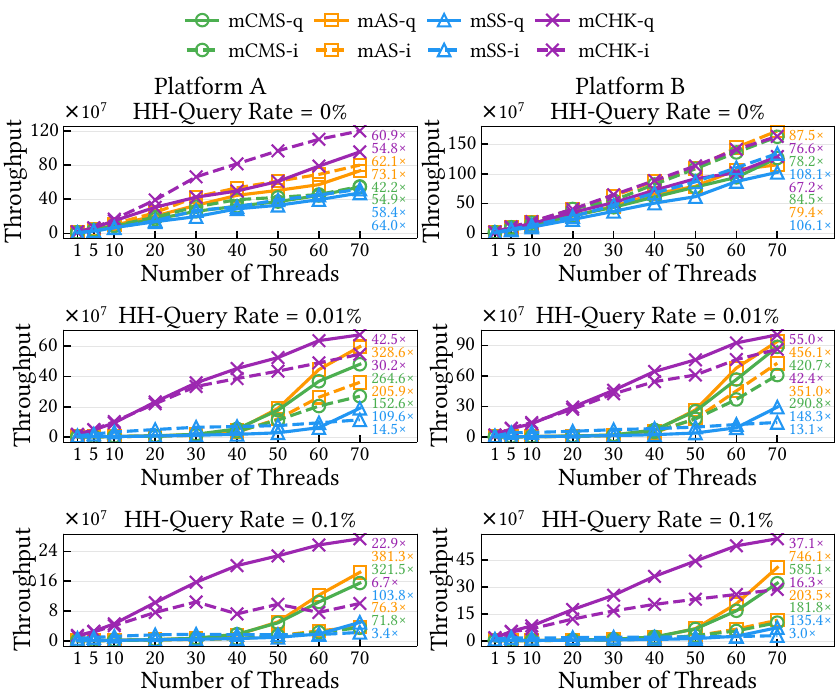}
  \captionsetup{skip=2pt, font=small}
  \caption{
  Plots show throughput and relative improvement compared to single-thread implementation of parallel variants on \textit{CAIDA\_H} across varying hh-query rates, thread counts, and platforms. 
  }
  \label{fig:par_caida_H_throughput}
  \Description{Plots show throughput of parallel variants on \textit{CAIDA\_H} across varying hh-query rates, thread counts, and platforms. }
\end{figure}

\setlength{\floatsep}{5.7pt} 
\hypertarget{box:key-takeaway3}{}
\tcolorbox[
floatplacement=tbh!,float,
  colback=white,
  colframe=MaterialOrange400,
  fonttitle=\small\bfseries,
  fontupper=\small,
  left=1pt,
  right=1pt,
  top=1pt,
  bottom=1pt,
  toptitle=0pt,       
  bottomtitle=0pt,    
  titlerule=0.4pt,    
  toprule=0.4pt,      
  title=Key Takeaway 3 -- on parallel throughput,
]

Both \emph{\paralgonamei} and \emph{\paralgonameq} consistently outperform other parallel variants in throughput by a substantial margin, especially at low to moderate thread counts. This difference arises from CHK's lean operations and accuracy improvements in heavy-hitter estimation, which together reduce the number of false positives that must be transferred during hh-queries. 
These results strengthen the claim that CHK and its parallel designs are adaptive to various workload conditions. Additionally, the performance gains at even lower thread counts suggests potential benefits in energy saving and elasticity possibilities, i.e. better resource utilization.
\endtcolorbox

\hypertarget{box:key-takeaway4}{}
\tcolorbox[
floatplacement=tbh!,float,
  colback=white,
  colframe=MaterialOrange400,
  fonttitle=\small\bfseries,
  fontupper=\small,
  left=1pt,
  right=1pt,
  top=1pt,
  bottom=1pt,
  toptitle=0pt,       
  bottomtitle=0pt,    
  titlerule=0.4pt,    
  toprule=0.4pt,      
  title=Key Takeaway 4 -- on parallel framework portability,
]

Our parallel framework can scale across diverse hardware,  with differences in memory architecture (NUMA, UMA) and processor features (hyperthreading,  non-hyperthreading). 
This portability extends even to cross-socket execution because of the {delegated operations} approach (also in~\cite{Stylianopoulos2020DelegationSA, jarlow2024qpopss, hilgendorf2025lmqsketchlagommultiquerysketch}), enhanced by the weighted update capability, which allows threads to aggregate work and streamline it with synchronization, reducing communication overhead. 
Furthermore, in hyper-threading environments, where sibling threads must share execution resources, CHK outperforms other algorithms by enabling threads to complete more workload in the same amount of time.
\endtcolorbox

\subsubsection{Experiments on hh-query latency}
Fig.~\ref{fig:par_caida_H_latency} shows the hh-query latency of our parallel implementations with varying thread counts and hh-query rates across different underlying algorithms. 
The results differ significantly between algorithms due to their query operation costs and overhead of transferring detected heavy hitters.
When comparing -Q and -I variants, we observe patterns consistent with the design projections. The -I variants exhibit higher latency that increases with thread count due to the need to interact with each thread to access thread-local structures. In contrast, -Q variants typically maintain more stable latency, especially at higher thread counts, due to their centralized query architecture.
However, an interesting exception occurs with CMS and AS variants under constrained memory conditions.
For these algorithms, the -Q variants actually show higher latency than expected, as they must frequently synchronize large numbers of misclassified heavy hitters to the global structure, creating substantial overhead.
The zoomed-in view of the results (with the y-axis scaled to match CHK's latency range) reveals that \emph{\paralgonameq} shows very low and stable query latency compared to other algorithms. After reaching a certain thread count, \emph{\paralgonameq} consistently maintains latency below 150 $\mu$sec even under high thread counts and query rates.

\vspace{0.3em}
\begin{figure}[tbh!]
  \centering
  \includegraphics[width=\linewidth]{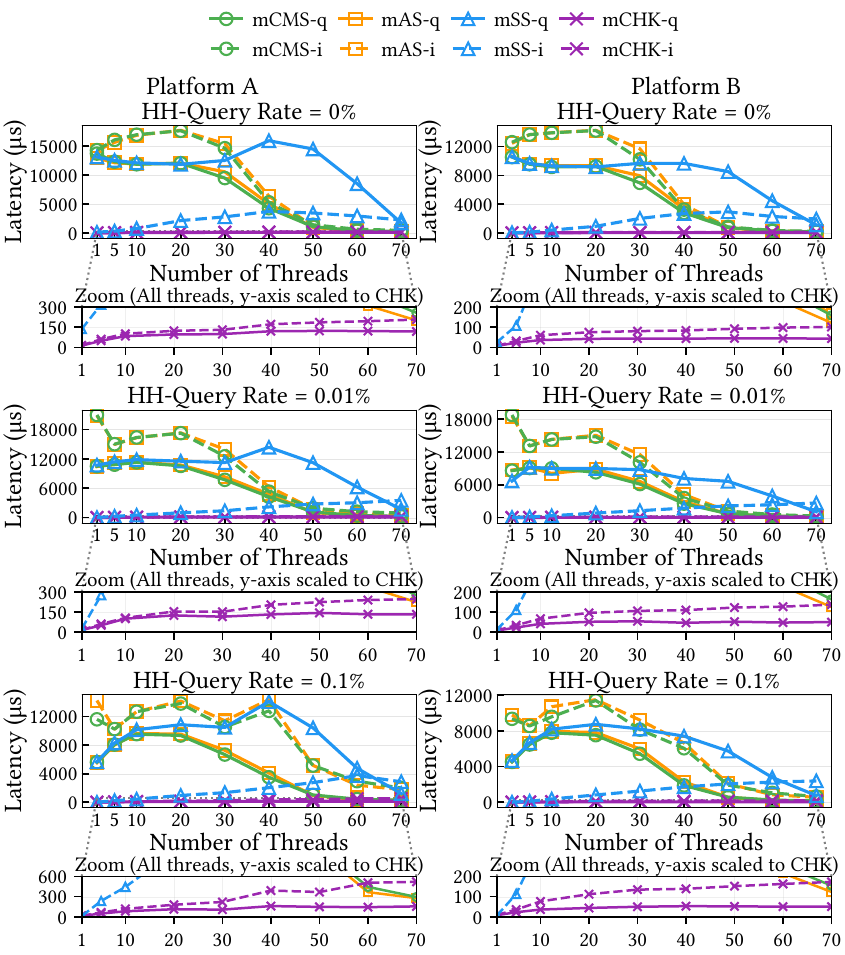}
  \captionsetup{skip=2pt, font=small}
  \caption{
  Plots show hh-query latency of parallel variants on \textit{CAIDA\_H} across varying hh-query rates, thread counts, platforms.
  }
  \label{fig:par_caida_H_latency}
  \Description{
  Plots show hh-query latency of parallel variants on \textit{CAIDA\_H} across varying hh-query rates, thread counts, and platforms.
  } 
\end{figure} 

\section{Conclusions and Future Work}
\label{sec:concl}
We introduced \emph{\algoname} (CHK), a fast, accurate, and space-efficient algorithm that delivers orders of magnitude better throughput and accuracy compared to state-of-the-art methods, even with tight memory and low-skew data. For parallel scalability, we proposed \emph{mCHK-I} and \emph{mCHK-Q}, achieving near-linear scaling with low hh-query latencies. 
These parallel algorithms operate as a \emph{wrapper} around any sequential heavy-hitter, without requiring mergeability, which enables modular integration into existing systems with minimal changes. This makes CHK and its parallel variants useful both as standalone algorithmic designs and as integrable building blocks within databases, stream processing engines, and data analytics frameworks. Future research directions include extending \emph{\algoname} to support sliding windows with potential integration into established data processing systems.\looseness=-1

\begin{acks}

Work supported by \grantsponsor{mcdn}{Marie Skłodowska-Curie Doctoral Network}{https://marie-sklodowska-curie-actions.ec.europa.eu/} RELAX-DN, funded by EU under Horizon Europe 2021-2027 FP Grant Agreement nr. \grantnum[www.relax-dn.eu/]{mcdn}{101072456}; \grantsponsor{vr}{Swedish Research Council}{https://www.vr.se/} prj.``EPITOME" \grantnum{vr}{2021-05424}; prj TANDEM (Swedish Energy Agency SESBC, ref. nr. 2021-035871/IEM2022-08); Chalmers AoA Energy-INDEED \& Production-"Scalability, Big Data and AI".
\end{acks}

\balance
\bibliographystyle{ACM-Reference-Format}
\bibliography{sample}

\end{document}